\newcommand{\CC}{\mathbb{C}}
\newcommand{\EE}{\mathbb{E}} 
\newcommand{\FF}{\mathbb{F}}
\newcommand{\KK}{\mathbb{K}}
\newcommand{\PP}{\mathbb{P}} 
\newcommand{\RR}{\mathbb{R}}
\newcommand{\ZZ}{\mathbb{Z}}
\newcommand{\cA}{\mathcal{A}}
\newcommand{\cC}{\mathcal{C}}
\newcommand{\cF}{\mathcal{F}}
\newcommand{\cE}{\mathcal{E}}
\newcommand{\cG}{\mathcal{G}}
\newcommand{\cN}{\mathcal{N}}
\newcommand{\cX}{\mathcal{X}}
\renewcommand{\a}{\alpha}
\newcommand{\D}{\Delta} 
\renewcommand{\d}{\delta} 
\newcommand{\G}{\Gamma}
\newcommand{\g}{\gamma} 
\renewcommand{\L}{\Lambda}
\renewcommand{\l}{\lambda}
\renewcommand{\b}{\beta} 
\renewcommand{\k}{\kappa} 
\newcommand{\Om}{\Omega}
\renewcommand{\S}{\Sigma} 
\newcommand{\s}{\sigma}
\newcommand{\eps}{\varepsilon}
\newcommand{\el}{\langle} 
\newcommand{\er}{\rangle}
\newcommand{\tr}{\mathrm{tr}}
\newcommand{\sd}{\triangle}
\newcommand{\fk}{{\sc fk}}
\newcommand{\tfim}{{\sc tfim}}
\newcommand{\lra}{\leftrightarrow}
\renewcommand{\b}{\beta}
\newcommand{\oo}{\infty}
\newcommand{\sm}{\setminus}
\newcommand{\es}{\varnothing}
\newcommand{\se}{\subseteq}
\newcommand{\ol}{\overline}
\newcommand{\crit}{\mathrm{c}}
\newcommand{\one}{\hbox{\rm 1\kern-.27em I}}
\newcommand{\p}{\mathfrak{p}}
\newcommand{\f}{\mathfrak{f}}
\newcommand{\w}{\mathfrak{w}}
\newcommand{\fs}{\mathfrak{s}}
\newcommand{\ft}{\mathfrak{t}}
\newtheoremstyle{slthm}% theoremstyle with slanted body type
     {}%      Space above (empty=\topspace)
     {\baselineskip}%      Space below 
     {\slshape}%         Body font
     {\parindent}%    Indent amount (\parindent = para indent)
     {\scshape}% Thm head font
     {.}%        Punctuation after thm head
     { }%     Space after thm head: " " = normal interword space
     {}%         Thm head spec (can be left empty, meaning `normal')
\theoremstyle{slthm}
\newtheorem{definition}{Definition}[section]
\newtheorem{theorem}[definition]{Theorem}
\newtheorem{proposition}[definition]{Proposition}
\newtheorem{lemma}[definition]{Lemma}
\newtheorem{remark}[definition]{Remark}
\title[Vanishing critical magnetization]
{Vanishing critical magnetization\\
in the quantum Ising model}
\author{Jakob E. Bj\"ornberg}
\thanks{Department of Mathematics,
Uppsala University, Box 256, 751 05 Uppsala, Sweden, 
Phone +46(0)18-471 3106,
e-mail: jakob@math.uu.se}
\date{\today}
\begin{document}

\begin{abstract}
Adapting the recent argument of
Aizenman, Duminil-Copin and Sidoravicius
for the classical Ising model,  it is shown here
that the  magnetization in the
transverse-field Ising model vanishes at the critical
point.  The proof applies to the ground state
in dimension $d\geq2$ and to positive-temperature
states in dimension $d\geq 3$, and relies on graphical
representations as well as an infrared bound.
\end{abstract}

\maketitle

\section{Introduction}\label{intro_sec}

This article concerns the 
transverse-field Ising model,  introduced 
in~\cite{lsm}
and a well-known generalization of
the familiar (classical) Ising model for ferromagnetism.
The model  possesses
a phase transition and  a critical
point, which may be identified using the 
\emph{(residual) magnetization}.
The magnetization equals zero
below the critical point and is positive above it.
An important result for the classical
Ising model is that
the magnetization also vanishes
\emph{at} the critical point:  in two
dimensions this goes back to the work of 
Onsager~\cite{onsager}, in dimension $d\geq4$ it 
was first proved by Aizenman and 
Fern\'andez~\cite{af}, and recently the final
case $d=3$ was established by Aizenman, Duminil-Copin 
and Sidoravicius~\cite{adcs}.
Building on the methods of~\cite{adcs}, the present 
work shows that the magnetization
in the transverse-field model also vanishes
at the critical point.  This implies that there
is a unique equilibrium state at the critical point.
We give precise statements
shortly, but first introduce the relevant 
notation and definitions.

Let $n\geq1$ and write
\[
\L=\L_n=[-n,n]^d=\{-n,-n+2,\dotsc,n-1,n\}^d
\]
for a finite box in $\ZZ^d$.  The transverse-field
Ising model is defined via its Hamiltonian, which in the
finite volume $\L$ takes the form
\[
H_\L=-\l\sum_{xy\in\L} \s_x^{(3)}\s_y^{(3)}-\d\sum_{x\in\L} \s_x^{(1)}
-\g \sum_{x\in\L} \s_x^{(3)}.
\]
Here the first sum is over all (unordered) 
nearest neighbours in $\L$,
\[
\s^{(1)}=\begin{pmatrix}
1 & 0\\
0 & -1
\end{pmatrix},
\qquad
\s^{(3)}=\begin{pmatrix}
0 & 1\\
1 & 0
\end{pmatrix}
\]
are the spin-$\tfrac{1}{2}$ Pauli matrices,
and $\s^{(i)}_x=\s^{(i)}\otimes\mathrm{Id}_{\L\sm\{x\}}$.
The parameters $\l$ and $\d,\g$
are nonnegative and represent spin-coupling
and field-strengths, respectively.
$H_\L$ is an operator on the Hilbert space
$\bigotimes_{x\in\L}\CC^2$, and one defines 
for each $\b\in(0,\oo)$ the state
$\langle\cdot\rangle_{\L,\b}$ by
\begin{equation}\label{Q-state}
\langle Q\rangle_{\L,\b}=\frac{\tr(Qe^{-\b H_\L})}{\tr(e^{-\b H_\L})}.
\end{equation}
The parameter $\b$ is referred to as \emph{inverse temperature}. 
(Readers with a probabilistic background may prefer the `path
integral' definition of $\langle\cdot\rangle_{\L,\b}$
given in Section~\ref{spin_sec}.)

Of particular interest are the one- and two-point functions
\[
\langle \s_x^{(3)}\rangle_{\L,\b}\quad\mbox{ and }\quad
\langle \s_x^{(3)}\s_y^{(3)}\rangle_{\L,\b},
\]
and more general correlation functions
\begin{equation}\label{corr_eq}
\langle \s_A^{(3)}\rangle_{\L,\b},\quad\mbox{ where }
\s_A^{(3)}=\prod_{x\in A}\s_x^{(3)}.
\end{equation}
As written, these are analytical functions
of the model parameters $\l,\d,\g$.  However,
one is interested in their their
limits as $n\to\oo$ or $\b,n\to\oo$.
(Existence of the limits is well-known, 
see eg~\cite{akn} or~\cite{bjo_phd}.)
These  need not be analytical, or even continuous:  
for example if $\g=0$ then by symmetry
$\langle \s_0^{(3)}\rangle_{\L,\b}=0$, whereas the residual
magnetization 
\begin{equation}\label{resid_eq}
\begin{split}
M^+_\b(\l,\d)&:=\lim_{\g\downarrow0}\lim_{n\to\oo}
\langle \s_0^{(3)}\rangle_{\L,\b}\\\mbox{or }
M^+_\oo(\l,\d)&:=\lim_{\g\downarrow0}\lim_{n\to\oo}\lim_{\b\to\oo}
\langle \s_0^{(3)}\rangle_{\L,\b}
\end{split}
\end{equation}
may be strictly positive.  This leads to the definition of the
\emph{critical point} 
\[
\l_\crit=\l_\crit(\d,\b):=\inf\{\l\geq 0: M^+_\b(\l,\d)>0\}.
\]
Note that $\l_\crit$ may also be defined in terms of 
the uniqueness of the infinite-volume states (for all
boundary conditions), or the divergence of the 
susceptibility, see~\cite{bjogr}.
We have that $0<\l_\crit<\oo$ if $d\geq2$, or if
$\b=\oo$ and $d\geq 1$.
The case $\b=\oo$ is referred to as the \emph{ground state}
and the case $\b<\oo$ as \emph{positive temperature}.

The following is the main result of this work:
\begin{theorem}\label{main_thm}
Let $\d>0$.
If $\b=\oo$ and $d\geq2$, or $\b<\oo$ and $d\geq3$, then
the residual magnetization satisfies
\[
M^+_\b(\l_\crit,\d)=0.
\]
\end{theorem}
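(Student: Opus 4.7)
The plan is to transplant the argument of Aizenman, Duminil-Copin and Sidoravicius~\cite{adcs} for the classical Ising model to the quantum setting, using a continuous-time graphical representation of the \tfim{} on $\L\times\Ob$ together with the infrared bound supplied by reflection positivity.

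First I would set up (or import from~\cite{bjo_phd,akn}) the space-time graphical representation. In it each site $x\in\L$ carries an independent Poisson process of `deaths' at rate $\d$, each unordered nearest-neighbour pair $xy$ an independent process of `bridges' at rate $\l$, and the external field $\g$ is realised through independent `ghost links' at rate $\g$ from every site to a single auxiliary ghost vertex $\ghost$. Correlations $\langle\s_A^{(3)}\rangle_{\L,\b}$ are then expressed as ratios of continuous-time random-parity partition functions with prescribed source sets; in particular $\langle\s_0^{(3)}\rangle_{\L,\b}$ is the probability, under the sourceless random-parity measure enlarged by ghost links, that $(0,0)$ is connected to $\ghost$.

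Second, I would develop the continuous-time \emph{switching lemma} and the associated representation of truncated correlations in this random-parity formalism. The switching lemma should express the truncated two-point function $\langle\s_0^{(3)}\s_x^{(3)}\rangle_{\L,\b}-\langle\s_0^{(3)}\rangle_{\L,\b}\langle\s_x^{(3)}\rangle_{\L,\b}$ as the probability, in a doubled current, that $(0,0)$ and $(x,0)$ lie in a common cluster which does not reach $\ghost$. This is the workhorse driving~\cite{adcs}.

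Third, with the graphical dictionary in place I would execute the ADCS dichotomy. Suppose for contradiction that $M^+_\b(\l_\crit,\d)>0$. Using the switching-lemma identities and \fkg{} at $\l=\l_\crit$, together with positivity of the residual magnetization as $\g\downarrow 0$, one derives a lower bound on a suitable sum $\sum_x$ of truncated two-point functions. On the other hand, reflection positivity in the space-time graph yields an infrared bound, giving an $L^2$ upper bound on the Fourier transform of $\langle\s_0^{(3)}\s_x^{(3)}\rangle$, summable provided the effective dimension exceeds $2$. The effective dimension is $d$ at positive temperature and $d+1$ in the ground state, which accounts precisely for the thresholds $d\geq3$ and $d\geq2$; comparing the two bounds yields the desired contradiction and hence $M^+_\b(\l_\crit,\d)=0$.

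The principal obstacle, I expect, is the faithful translation of the random-current combinatorics of~\cite{adcs} into the continuous-time setting. The switching lemma, the manipulation of source sets involving $\ghost$, and the percolation-style inequalities that feed into the dichotomy all rest on rather delicate finite-configuration rearrangements, which must now be performed in the Poissonian soup of bridges, deaths and ghost links on $\L\times\Ob$. Once this graphical machinery is established, the infrared bound and the final contradiction should follow by close analogy with~\cite{adcs}, and the passage $\g\downarrow 0$, $n\to\oo$ (and, in the ground state, $\b\to\oo$) should be routine given the monotonicity properties of the random-parity measure.
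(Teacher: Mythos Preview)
Your proposal correctly identifies the three main ingredients---the continuous-time random-parity representation, the switching lemma, and the infrared bound---and rightly anticipates that the principal difficulty is translating the discrete random-current combinatorics to the Poissonian setting. However, the architecture you sketch (assume $M^+>0$, derive a lower bound on a sum of truncated two-point functions via \fkg{}, and contradict the infrared bound) is not the structure of the argument in~\cite{adcs}, and it omits a component that is in fact the heart of the proof.

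The actual logic runs as follows. One first constructs an \emph{infinite-volume} version $\ol\PP$ of the doubled random-parity measure with empty source sets, and studies percolation of its cluster structure. A Burton--Keane argument shows that the number $U$ of unbounded clusters is a.s.\ $0$ or $1$; this requires proving translation-invariance, ergodicity, and a quantitative \emph{insertion-tolerance} (local modification) estimate for $\ol\PP$, none of which is automatic in the continuous-time setting (tightness of the finite-volume measures, for instance, needs a separate argument since the configuration space is non-compact). The infrared bound is then used not to bound truncated correlations, but to show that the spatial Ces\`aro average of the \emph{free} two-point function vanishes at $\l_\crit$; combined with the switching identity $\ol\PP((0,0)\lra(x,t))\leq\langle\s_{0,0}\s_{x,t}\rangle^{(\f)}$ and the a.s.\ uniqueness of the unbounded cluster, this forces $\ol\PP(U=0)=1$. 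A second local-modification estimate (removing the sources $0,\k$ from $\hat\psi_{0\k}$ at bounded multiplicative cost) then shows that absence of percolation implies $\langle\s_{0,0}\s_{x,t}\rangle^{(\w)}=\langle\s_{0,0}\s_{x,t}\rangle^{(\f)}$, whence $(M^+)^2\leq\langle\s_{0,0}\s_{x,t}\rangle^{(\f)}$ and the vanishing Ces\`aro average gives $M^+=0$.

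In short: the pieces you did not mention---the infinite-volume construction of $\ol\PP$, its ergodicity, the Burton--Keane uniqueness, and the insertion/deletion-tolerance bounds needed both for Burton--Keane and for the source-removal step---constitute most of the new work. The external field $\g$ and ghost links play no role; the paper works at $\g=0$ throughout and compares free and wired boundary conditions, using Griffiths inequalities rather than \fkg{}.
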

If $\d=0$ one recovers the
classical Ising model;  it is then standard to take $\l=1$
and vary the parameter $\b$, giving the critical point $\b_\crit$
which is positive and finite if $d\geq2$.  
As remarked above, in this case the result is well-known.  
For $\d>0$ the case when $\b=\oo$ and $d=1$ 
is also known and was
established in~\cite{pfeuty}
(and reproved in~\cite{bjogr} using graphical methods).
The other cases are new.  Note that the
only nontrivial case left open by Theorem~\ref{main_thm} is when
$\b<\oo$ and $d=2$, which remains open  for $\d>0$.

Like the previous works~\cite{adcs,af}
on the classical model, our proof of Theorem~\ref{main_thm} uses
\emph{graphical representations}.  For the classical Ising model, and
related models such as the Potts model, the use of graphical
representations is a standard tool and has been a huge success
since the seminal work of 
Fortuin and Kasteleyn~\cite{fk}.  
In more recent times graphical representations have also been very
successful in the study of quantum models, not only the Ising
model~\cite{akn,bjo_irb,bjogr,campanino,driessler} 
but also Heisenberg 
models~\cite{aizenman_nacht,toth,ueltschi}
(see also~\cite{golds}).
The transverse-field
Ising model possesses at least three graphical representations,
which may be called firstly the space--time spin representation, 
secondly the random-parity representation, and
thirdly the {\fk}-representation.  The first
of these goes back  to~\cite{driessler}, the
last to to~\cite{akn,campanino},
whereas the second was developed in~\cite{bjogr} 
(see also~\cite{crawford_ioffe}
for the related random-current representation).  These graphical
representations are obtained by applying the Lie--Trotter expansion to
the correlations functions~\eqref{corr_eq} using the eigenbasis for either
$\s^{(3)}$ or $\s^{(1)}$, see~\cite{ioffe_geom}.  

Of primary importance for the
present work is the random-parity representation, which is described
in Section~\ref{rpr_sec}.  It is a continuous version of the 
random-current representation for the classical Ising model, developed 
in~\cite{aiz82}.  The key insight of the works~\cite{aiz82,adcs}
was that the phase transition in the (classical) Ising model relates to
a percolation transition for the random currents,
and the present work exploits a similar picture for the
quantum model.

In addition to graphical representations, the other key component of
the proof of Theorem~\ref{main_thm} is an \emph{infrared bound}
proved in~\cite{bjo_irb}, and stated below
in~\eqref{irb}.
This is a bound on the Fourier-transform of the
Schwinger function: 
\begin{equation}\label{schwinger_eq}
c((x,s),(y,t))=
\frac{1}{\tr(e^{-\b H_\L})}
\tr(e^{-(\b-t+s) H_\L}\s_y^{(3)}e^{-(t-s) H_\L}\s_x^{(3)}).
\end{equation}
Infrared bounds go back to~\cite{fss}
and one of their great successes is the proof by Dyson, Lieb and
Simon~\cite{dls} of the existence of a phase transition
in the antiferromagnetic Heisenberg model.  
See also~\cite{ueltschi} for a recent infrared
bound for the Heisenberg model in the same spirit as the bound
employed here.

The argument for proving Theorem~\ref{main_thm}
follows the general outline of the argument for the
classical model given in~\cite{adcs}.  
The first step is to develop an infinite-volume
version of the random-parity 
representation and study percolation under this
measure, see Section~\ref{infvol_sec}.  The infrared bound is 
used to show that when $\l=\l_\crit$ then
(for $\b<\oo$ and $d\geq3$ or $\b=\oo$ and $d\geq2$)
there is no unbounded percolation cluster,
see Proposition~\ref{crit-perc_prop}.  
Combined with `local modifcations' of the random-parity 
representation (Proposition~\ref{local_prop}) and the
switching lemma (Lemma~\ref{sw_lem}), this allows us 
to deduce the result, as described in Section~\ref{proof_sec}.
Compared with the classical model~\cite{adcs}, 
the main difficulty in the present work arises from the
`continuous' nature of the graphical representations in the quantum
setting.  For example, the configuration space of
the random-parity representation is non-compact so an argument
is needed to obtain tightness of the sequence of finite-volume 
random-parity measures (see Proposition~\ref{weak_prop}).
Related difficulties arise in proving insertion  tolerance
(see Proposition~\ref{local_prop}) and ergodicity 
(see Lemma~\ref{mix_lem}).

In the rest of this article we
fix $\d>0$.  We also set $\g=0$ and use the equality
of the residual and spontaneous 
magnetization~\eqref{spont_resid}.
We  use the abbreviation {\tfim} for transverse-field Ising
model, and we use the following probabilistic notation:
$\one_A$ or $\one\{A\}$ for the indicator function
taking value 1 if the event $A$ occurs, 0 otherwise, and
$\PP(X)$ for the expectation of the random variable $X$
under the probability measure $\PP$.

\section{Graphical representations}
\label{graphical_sec}

In this section we present two graphical representations of the
{\tfim}, namely the space--time spin
representation and the random-parity representation.  The latter
represents the correlation functions~\eqref{corr_eq} 
and Schwinger functions~\eqref{schwinger_eq} 
in terms of expectations over random
`paths' and is of central importance to this work.  A major technical
tool in this representation is the \emph{switching lemma}, which we
describe in Section~\ref{switch_sec}.  In Section~\ref{local_sec}
we then prove some results on `local
modifications' in this representation, which are forms of insertion-
and deletion tolerance.

The space--time spin representation 
has a less prominent role in the
main argument than the random-parity representation,
and is used mainly to
establish an ergodicity property of the infinite-volume random-parity
representation
(see Lemma~\ref{mix_lem} and Proposition~\ref{ergodic_prop}).  
However, it provides a natural setting to
introduce a class of boundary conditions that are of central
importance to this work, and may
also provide a more intuitive description
of the {\tfim} to readers with a
probabilistic background than the definition given in
Section~\ref{intro_sec}.

Before proceeding we introduce some notation.  
Recall that $\L_N$ denotes the box $[-N,N]^d\se\ZZ^d$ and
let $\partial\L_N=\L_N\sm\L_{N-1}$ denote the boundary
of $\L_N$.
For $r>0$ write $I_r$ for the interval $[-r/2,r/2]\se\RR$ and
define $K(N,r)=\L_N\times I_r$.  Also write 
$I_\oo=\RR$, $\KK_\b=\ZZ^d\times I_\b$ 
and $\KK=\KK_\oo=\ZZ^d\times I_\oo$. 
We frequently think of $\KK$ as a subset of $\RR^{d+1}$
and $K(N,r)$ as a subset of $\KK$ in the natural way.
For elements $x\in\ZZ^d$ or $(x,t)\in\KK$ we write $\|x\|$ and
$\|(x,t)\|=\|x\|+|t|$ for their $\ell^1$-norm.  We
will also use the notation $\cE=\{xy:x,y\in\ZZ^d,\|x-y\|=1\}$
for the set of unordered pairs of nearest neighbours in $\ZZ^d$,
$\cE_N=\{xy:x,y\in\L_N,\|x-y\|=1\}$
for  nearest neighbours in $\L_N$,
and $F(N,r)=\cE_N\times I_r$ as well as 
$\FF=\cE\times I_\oo$.  

\subsection{Space--time spin representation}
\label{spin_sec}

Let $\S_\b$ be the set of functions 
$\s(\cdot,\cdot):\KK_\b\to\{-1,+1\}$ such that for all 
$x\in\ZZ^d$, the restriction $\s(x,\cdot):I_\b\to\{-1,+1\}$
is right-continuous and changes value finitely often
in each bounded interval.  Also let $\S=\S_\oo$
and let $\S(N,r)$ be
the set of restrictions of elements of $\S$ to $K(N,r)$.
The space--time spin representation is
based on probability measures on $\S(N,r)$.  
To define these, let $E$ be a 
probability measure governing
\begin{itemize}
\item[(a)]  a collection $D=(D_x:x\in\ZZ^d)$ of independent 
Poisson processes on $\RR$ of rate $\d$, and
\item[(b)]  a collection $\xi=(\xi_x:x\in\ZZ^d)$
of independent random variables taking the values
0 or 1 with equal probability.  
\end{itemize}
We let 
\begin{equation*}
\s(x,t)=\left\{
\begin{array}{ll}
  (-1)^{\xi_x+|D_x\cap (0,t]|} & \mbox{if } t\geq0,\\
(-1)^{\xi_x+|D_x\cap (t,0]|} & \mbox{if } t<0.
\end{array}\right.
\end{equation*}
Thus $\s(x,0)=(-1)^{\xi_x}$ and $\s(x,t)$ switches
value at the points $t\in D_x$.  See Figure~\ref{graphical_fig}
for an example.
We sometimes write $\s_{x,t}$ for $\s(x,t)$.
In this way $E$
defines an `a-priori' measure on $\S$.  We write $E_{N,r}$ for the
induced measure on $\S(N,r)$.

The general notation for space--time Ising 
probability measures on $\S(N,r)$ will be of
the form 
\begin{equation*}
\mu_{N,r}^{\mathfrak{s},\ft}(\cdot),\qquad\mbox{where }
\mathfrak{s},\ft\in\{\f,\w,\p\}.
\end{equation*}
The superscripts $\mathfrak{s},\ft$ 
denote boundary conditions, `spatial' and 
`temporal', respectively, and their
values $\f,\w,\p$ stand for `free', `wired' and 
`periodic'.  The easiest to describe is
$\mu_{N,r}^{\f,\f}$, which is given by its density
\begin{equation}\label{strn_eq}
\frac{d\mu_{N,r}^{\f,\f}}{dE_{N,r}}(\s)=\frac{1}{Z^{\f,\f}_{N,r}}
\exp\Big(\l\sum_{xy\in\cE_N}\int_{I_r}\s(x,t)\s(y,t)dt\Big).
\end{equation}
Here 
\begin{equation}\label{stpf_eq}
Z^{\f,\f}_{N,r}=E_{N,r}\Big[\exp\Big(
\l\sum_{xy\in\cE_N}\int_{I_r}\s(x,t)\s(y,t)dt\Big)\Big]
\end{equation}
is the appropriate normalization.  To obtain the 
boundary conditions 
$\ft=\w$ and $\ft=\p$, 
respectively, we include in the
density~\eqref{strn_eq} (and in the partition
function~\eqref{stpf_eq}) the following restrictions:
\begin{itemize}
\item for $\ft=\w$, that $\s(x,-r/2)=\s(x,r/2)=+1$ for all $x\in\L_N$, 
\item for $\ft=\p$, that $\s(x,-r/2)=\s(x,r/2)$ for all $x\in\L_N$.
\end{itemize}
(Thus, intuitively, for $\ft=\w$ the spins at
the endpoints of $I_r$ are `frozen' to be $+1$, whereas
for $\ft=\p$ we think of $I_r$ as a circle
of length $r$.)
To obtain the 
boundary condition $\mathfrak{s}=\w$
we  replace $\cE_N$
by $\cE_{N+1}$ in the sums 
in~\eqref{strn_eq}--\eqref{stpf_eq} and set
$\s(x,t)=+1$ for any $x\in\partial\L_{N+1}$.
Finally, to obtain the 
boundary condition $\mathfrak{s}=\p$ 
we replace $\cE_N$ by the set $\cE_N^\p$ obtained by adding
to $\cE_N$ all pairs $xy$ such that $x$ and $y$  differ in exactly one 
coordinate, this coordinate being $-N$ in one case and $N$ in the
other.  (Intuitively this makes $\L_N$ `wrap around' in each
coordinate direction.) 

\begin{figure}[tbp]
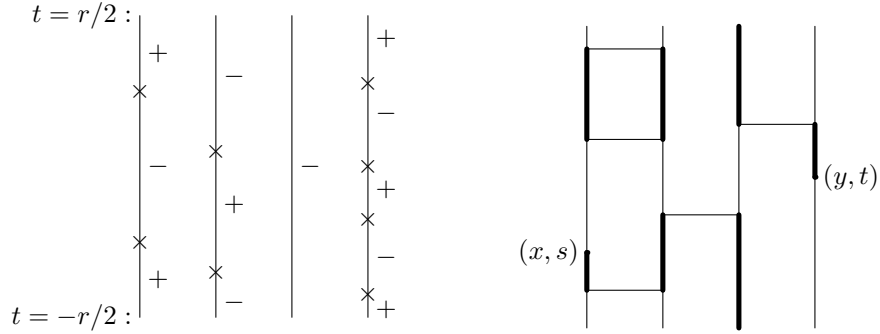

\includegraphics{critmag.3}
\hspace{1.3cm} 
\includegraphics{critmag.4}
\caption
{\emph{Left:} A realization of the spin-representation $\s$.
The elements of $D$ are marked as $\times$, and the value of $\s$ is
indicated next to each line segment.  (This realization is compatible
with $\ft=\f$ or $\p$ but not with $\ft=\w$.)  
\emph{Right:} A realization of the random-parity 
representation $\psi^\p_A$, where $A=\{(x,s),(y,t)\}$.  `Odd'
intervals are drawn bold, bridges as horizontal lines.  (This
realization is not compatible with $\ft=\f$ or $\w$.)}
\label{graphical_fig}
\end{figure}

The connection to the {\tfim} is that the correlation
functions~\eqref{corr_eq} satisfy
\[
\langle \s_A^{(3)}\rangle_{\L,\b}=
\mu_{n,\b}^{\f,\p}\Big(\prod_{x\in A}\s(x,0)\Big), 
\]
and the Schwinger function~\eqref{schwinger_eq}
\[
c((x,s),(y,t))=\mu_{n,\b}^{\f,\p}\big(\s(x,s)\s(y,t)\big).
\]
(Recall that we write $\mu(f)$ for the
$\mu$-expectation of a function $f(\s)$.)  
In light of this correspondence it is natural to use the notation 
\begin{equation}\label{corr_eq2}
\langle \s_A\rangle_{N,r}^{\mathfrak{s},\ft}=
\mu_{N,r}^{\mathfrak{s},\ft}
\Big(\prod_{(x,t)\in  A}\s(x,t)\Big)
\end{equation}
for general finite subsets $A\se K(N,r)$.  Henceforth we refer to the
quantities in~\eqref{corr_eq2} as \emph{correlation functions}.

The $\p$-boundary condition
in `time' arises automatically from the cyclicity of the trace
in~\eqref{Q-state}.  The other boundary conditions 
$\ft=\f,\w$ are
convenient when working with infinite-volume limits.  
The correlation functions~\eqref{corr_eq2} have
a natural monotonicity in the boundary conditions.
In particular, 
\begin{equation}\label{corr_comp}
\langle \s_A\rangle_{N,r}^{\f,\f}\leq
\langle \s_A\rangle_{N,r}^{\f,\p}\leq
\langle \s_A\rangle_{N,r}^{\p,\p}\leq
\langle \s_A\rangle_{N,r}^{\w,\p}\leq
\langle \s_A\rangle_{N,r}^{\w,\w},\mbox{ etc.}
\end{equation}
(A detailed proof  for the present model 
may easily be devised using Theorem~2.2.12 and
Lemma~2.2.21 of~\cite{bjo_phd}.)

When $\b<\oo$ we fix $\ft=\p$ 
and work with the measures
$\mu_{N,\b}^{\f,\p}$ and $\mu_{N,\b}^{\w,\p}$.  When $\b=\oo$
we will primarily be working
with $\mu_{N,r}^{\f,\f}$ and $\mu_{N,r}^{\w,\w}$
where $r=2N$.  The (weak) limits of these measures 
exist as $N\to\oo$ and are related,
respectively, to the positive-temperature and ground-state limits
appearing in~\eqref{resid_eq}.  In particular, 
\begin{equation}\label{spont_resid}
M^+_\oo=\lim_{N\to\oo} \langle \s(0,0)\rangle_{N,r=2N}^{\w,\w},\quad
M^+_\b=\lim_{N\to\oo} \langle \s(0,0)\rangle_{N,\b}^{\w,\p}
\mbox{ for } \b<\oo,
\end{equation}
i.e.\ the residual and spontaneous magnetization
coincide, cf.~\cite{lml} and Section~2.5.2 of~\cite{bjo_phd}.
We write $\mu_{\l,\b}^{(\mathfrak{s},\ft)}(\cdot)$ 
or $\el\cdot\er_{\l,\b}^{(\mathfrak{s},\ft)}$ for any weak
limit obtained as above with boundary conditions 
$\mathfrak{s},\ft$ (and $\b<\oo$ or $\b=\oo$).
When $M^+=0$ then the limit measure is unique:
\[
\mu_{\l,\b}^{(\f,\p)}=\mu_{\l,\b}^{(\w,\p)}\mbox{ for }\b<\oo,\qquad
\mu_{\l,\oo}^{(\f,\f)}=\mu_{\l,\oo}^{(\w,\w)}.
\]
(A detailed proof for the present model appears 
in~\cite[Theorem~2.5.9]{bjo_phd};  it follows closely the
argument for the classical model~\cite{lml}.)
Thus our main result Theorem~\ref{main_thm} implies that the limit
measure is unique at the critical point.
For more information about the statements
in this subsection, and the spin-representation in 
general, see~\cite{bjo_phd}.

\subsection{Random-parity representation}
\label{rpr_sec}

Like the space--time spin representation of the 
previous subsection, the random-parity representation
expresses the correlation functions~\eqref{corr_eq2}
using Poisson processes in $K(N,r)$.
This time we write $E_{N,r}$ for a probability measure
governing a collection $B=(B_{xy}:xy\in\cE_N)$
of independent Poisson processes on $I_r$ with
intensity $\l$, as well as an independent collection
$\tau=(\tau_x:x\in\L_N)\in\{0,1\}^{\L_N}$ 
of independent random variables taking values 0 or 1
with equal probability $1/2$.  We sometimes refer to the points 
of $B$ as \emph{bridges}.
When considering the correlation 
function $\langle \s_A\rangle_{N,r}^{\mathfrak{s},\ft}$
we will use the term \emph{switching point}
for any point $(x,t)\in K(N,r)$
such that either (i) $(x,t)\in A$, or (ii)
there exists $y\in\L_N$ such that $t\in B_{xy}$
(ie, $(x,t)$ is the `endpoint' of some bridge).
The collection of switching points is denoted by 
$S=(S_x:x\in\L_N)$, where $S_x$ is the set of $t\in I_r$
such that $(x,t)$ is a switching point.  
The points of $A$ are referred to as \emph{sources}.
We say that  $B$
is \emph{consistent} with $A$ if $|S_x|$ is
even for each $x$, and in this case we will also refer
to $S$ itself as consistent.

For each consistent $S$ we will define a labelling of
$K(N,r)$ using the labels `even' and `odd', and
for definiteness we use the convention that the
`odd' subset is closed.  See Figure~\ref{graphical_fig} again for an
illustration of the description that follows.
The definition  will depend on the boundary 
conditions $\mathfrak{s}$
and $\ft$.  We will not be using the
random-parity representation for 
$\mathfrak{s}=\p$ so we omit describing it.
For $\mathfrak{s}=\f$ we will denote the
labelling $\psi^{\ft}_A$ and for
$\mathfrak{s}=\w$ by 
$\hat\psi^{\ft}_A$.  
In what follows we assume for simplicity
that $A$ does not contain any point
of the form $(x,\pm r/2)$.  We begin with the case
$\mathfrak{s}=\f$:
\begin{list}{$\bullet$}{\leftmargin=1em}
\item If  $\ft=\f$ 
we label each point
$(x,\pm r/2)$ `even', and define the rest of 
$\psi^\f_A$
by requiring that labels switch between `even'
and `odd' at the points of $S$ and are constant 
in between.  This is possible
due to the assumption that $S$ is consistent.
\item If $\ft=\w$ we instead
label all points of the form $(x,\pm r/2)$ `odd' and
apply the same rule for switching at points of $S$.  
\item If $\ft=\p$ we require the vector $\tau$.
We define $\psi_A^\p$
by labelling each $(x,0)$ `even' if $\tau_x=0$
or `odd' if $\tau_x=1$, and letting the label switch
at the points of $S$ as before.  Due to the consistency
of $S$ we can think of $\psi^\p_A$ as a labelling
of the circle.  (Clearly the choice $(x,0)$
is arbitrary, and one may equally well let
$\tau$ determine all the labels $(x,t)$ for any 
fixed $t$.)
\end{list}

We now describe how to define the labelling in the 
case when $\mathfrak{s}=\w$.  We now let 
$\hat E_{N,r}$ denote a probability measure
which, in addition to processes $B$ and $\tau$
as above, also governs a process 
$G=(G_x:x\in\partial\L_N)$ of independent Poisson
processes on $I_r$.  The intensity of $G_x$ depends
on $x$, and equals $\l$ times the number of 
$y\in\ZZ^d\sm\L_N$ such that $xy\in\cE$.  
For simplicity we set $G_x=\es$ for
$x\in\L_N\sm\partial\L_N$.
We augment the switching points $S$ to 
contain also each $(x,t)$ such that $t\in G_x$.
As before we say that $S$ is consistent if each
$|S_x|$ is even, and in that case we also say that the
pair $(B,G)$ is consistent with $A$.
Given a consistent $S$ we obtain the labelling, which we
now denote 
$\hat\psi^\ft_A$, precisely as before.

Given a labelling $\psi^\ft_A$ 
(respectively, $\hat\psi^\ft_A$), 
let $\epsilon$ denote
the total length of all intervals in $K(N,r)$
labelled `even', and let the
\emph{weight} $\partial\psi^\ft_A$ 
(respectively, $\partial\hat\psi^\ft_A$) 
be defined as $e^{2\d\epsilon}$.  In the case when
$S$ is not consistent we define the weight to be zero.
The random-parity
representation allows us to write
\begin{equation}\label{rpr_eq}
\langle \s_A\rangle^{\f,\ft}_{N,r}=
\frac{E_{N,r}(\partial\psi^\ft_A)}
{E_{N,r}(\partial\psi^\ft_\es)}
\quad \mbox{and} \quad
\langle \s_A\rangle^{\w,\ft}_{N,r}=
\frac{\hat E_{N,r}(\partial\hat\psi^\ft_A)}
{\hat E_{N,r}(\partial\hat\psi^\ft_\es)}.
\end{equation}
The first of these ($\mathfrak{s}=\f$) was proved 
in~\cite[Theorem~3.1]{bjogr}, and the second ($\mathfrak{s}=\w$)
follows using similar arguments.

In fact, the representation~\eqref{rpr_eq}
holds for correlations in more general subsets
$K'$ of $\KK$ than $K(N,r)$, in particular
for `regions with holes'.  We briefly outline this 
now (it will be used in Proposition~\ref{weak_prop}).
Write $K=K(N,r)$, let $J$ be a finite collection
of disjoint closed intervals in $K$, and let $K'=K\sm J$.
Write $\partial J$ for the set of all endpoints
of intervals in $J$ (if $J$ contains some interval
consisting of a single point $(x,t)$ we
distinguish between $(x,t+)$ and $(x,t-)$).
For each $xy\in\cE_N$ let $B_{xy}$ denote a Poisson
process on $I_r$ with variable intensity:
$\l$ if $(x,t),(y,t)\in K'$, otherwise 0.  We 
start by labelling
each point $(x,t)\in\partial J$ `even'.  If $\ft=\f$
(respectively, $\ft=\w$) we label each point
$(x,\pm r/2)\in K'$ `even' (respectively, `odd').
If $\ft=\p$ then for each $x$ such that 
$\{x\}\times I_r\se K'$ we let $\tau_x$ determine
the label $(x,0)$ as before.  The full labelling is finally
obtained by letting
the labels switch at the points of $S$ as before.
(A labelling $\hat\psi_A^\ft$ of $K'$ is 
obtained similarly, modifying also $G$ to have
zero intensity in $J$.)  Writing $E_{K'}$ for the
corresponding probability measure, we note that
\begin{equation}\label{holes_eq}
E_{K'}(\partial\psi_\es^\ft)=
e^{-2\d |J|}
E_{N,r}(\partial\psi_\es^\ft
\one\{\psi_\es^\ft\mbox{ `even' in }J\}),
\end{equation}
where $|J|$ denotes the total
length of the intervals comprising $J$,
see~\cite{bjogr}.

\subsubsection{Switching lemma}
\label{switch_sec}

Consider the case in~\eqref{rpr_eq} 
when $A$ consists of the two points $(0,0)$
and $(x,t)$.  The consistency constraint on $S$, and 
the way the labels are defined, forces the existence 
of an `odd path' between the two sources $(0,0)$
and $(x,t)$.  A similar, but more complicated,
picture arises for general sets $A$.  The main virtue
of the random-parity representation is that this
picture can be developed to the case of
\emph{pairs} of labellings in a way which also allows the
representation of \emph{differences} between correlation
functions.  The tool for this is called the
switching lemma.

Let $\EE_{N,r}$ denote a probability measure governing
the following independent random variables:
\begin{itemize}
\item[(a)] two copies of the process
$B$, denoted $B$ and $\hat B$;
\item[(b)] two  copies of $\tau$, 
denoted $\tau$ and $\hat\tau$;  
\item[(c)] one copy of the process $G$;  and
\item[(d)] one copy of a process $\D=(\D_x:x\in\L_N)$,
where the $\D_x$ are independent Poisson processes
on $I_r$ with intensity $4\d$.
\end{itemize}
Thus we may write $\EE_{N,r}=E_{N,r}\times\hat E_{N,r}\times P_\D$,
where $P_\D$ denotes the distribution of $\D$.
We call $\D$ the process of `cuts'.
If we fix two finite sets $A_1$ and $A_2$ of
sources and two boundary conditions $\ft_1$
and $\ft_2$
then we obtain under $\EE_{N,r}$  a triple 
$(\psi_{A_1}^{\ft_1},\hat\psi_{A_2}^{\ft_2},\D)$ whose
components are independent.  
In what follows we will only be using the following 
combinations of boundary conditions $\ft_1$, $\ft_2$:
in the case $\b<\oo$ we take $\ft_1=\ft_2=\p$,
and in the case $\b=\oo$ we take $\ft_1=\f$
and $\ft_2=\w$, cf.\ the discussion below~\eqref{corr_comp}.
In the latter case we use the
notation $\hat G_x=G_x\cup \{-r/2,r/2\}$.

We next define a notion of
\emph{open paths}.  The precise definition of
a path depends on the boundary 
conditions $\ft_1,\ft_2$, and we start
with the case $\ft_1=\f,\ft_2=\w$.  
If $\k,\k'\in K(N,r)$
then an (open) path from $\k$ to $\k'$ is a sequence
of points $\k_0,\k_1,\dotsc,\k_{2m+1}\in K(N,r)$
satisfying the following:
\begin{enumerate}
\item $\k_0=\k$ and $\k_{2m+1}=\k'$;
\item\label{int_item}  
for each $j\in\{0,\dotsc,m\}$, if $\k_{2j}=(x,s)$
then $\k_{2j+1}=(x,t)$ for some $t$ such that
the interval $[s\wedge t,s \vee t]$ contains no
point $s'\in\D_x$ which is labelled `even'
in both labellings $\psi_{A_1}^\f$ and
$\hat\psi_{A_2}^\w$;
\item\label{jump_item} 
for each $j\in\{0,\dotsc,m-1\}$,
if $\k_{2j+1}=(x,s)$ and $\k_{2j+2}=(y,t)$
then either (i) $s=t\in B_{xy}\cup \hat B_{xy}$,
or (ii) $s\in\hat G_x$ and $t\in\hat G_y$.
\end{enumerate} 
Intuitively this  means that paths can
traverse bridges, `jump between' arbitrary
points of $\hat G$, and traverse subintervals of $K(N,r)$,
but are blocked by points of $\D$ which 
fall where both labellings are `even'.  
One way to think of the `jumping'
between points of $\hat G$ is that 
the points in $\hat G$ are connections to and
from a `ghost-site' $\G$.

To obtain the case $\ft_1=\ft_2=\p$
we modify (ii) in item~\eqref{jump_item} above
by replacing $\hat G$ with $G$ in both places,
and we additionally allow in~\eqref{int_item}
that one may traverse either the inverval
$[s,t]$ or the interval $[t,s]$, where these
are to be regarded as intervals in the circle
(we keep the restriction on $\D$).  
Intuitively these changes mean that
the `endpoints' $(x,\pm r/2)$ are no longer
connected to $\G$ but are identified
with each other.

The event that there is an open path between $\k$
and $\k'$ is written $\{\k\lra\k'\}$.  The 
possibility (ii) in~\eqref{jump_item} of
`jumping' via $\G$ is special, and in some
cases we want to consider paths which do not
do this.  We write 
$\{\k\lra\k'\mbox{ off }\G\}$ for the event that
there is some path that does not 
feature a pair $\k_{2j+1}=(x,s)$, $\k_{2j+2}=(y,t)$
with $x\neq y$ and $s,t\in\hat G$.
Note that when $\ft_1=\f,\ft_2=\w$ this also excludes
jumping via the endpoints $(x,\pm r/2)$.  
We also write $\{\k\lra\G\}$ for the event that
some open path connects $\k$ to a point in $\hat G$
(respectively, $G$).

Write 0 for the origin $(0,0)$ and
$\k$ for an arbitrary point in $K(N,r)$.
We will be using the following form of the 
switching lemma:
\begin{lemma}\label{sw_lem}
$\EE_{N,r}(\partial\psi^{\ft_1}_{0\k}
\partial\hat\psi_\es^{\ft_2})=
\EE_{N,r}(\partial\psi_\es^{\ft_1}\partial\hat\psi_{0\k}^{\ft_2}
\one\{0\lra \k\mbox{ off }\G\})$.
\end{lemma}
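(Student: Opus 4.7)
The plan is to follow the classical switching-lemma argument of Aizenman~\cite{aiz82}, adapted to the continuous setting of~\cite{bjogr}: pass to the superposition of the two independent random-parity configurations and exhibit a deterministic source-swapping involution on the fibres of the superposition that preserves the total weight.

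First I would re-encode the weight product $\partial\psi^{\ft_1}_{0\k}\partial\hat\psi^{\ft_2}_\es=e^{2\d(\epsilon_1+\epsilon_2)}$ as a Poisson-avoidance statement. A Poisson process of rate $4\d$ on $K(N,r)$ may be viewed, via independent fair colouring at each point, as the superposition of two rate-$2\d$ processes $\D_1,\D_2$; and $e^{2\d\epsilon_i}$ is a constant multiple of the probability that a rate-$2\d$ process avoids the \emph{odd} region $O_i$ of the $i$-th labelling. Thus, up to an overall constant, $\partial\psi\,\partial\hat\psi$ is the $\D$-probability that each point of $\D$ can be coloured $i\in\{1,2\}$ while missing $O_i$; this in particular forces $\D\cap(O_1\cap O_2)=\es$, which is precisely the cut-blocking rule in the definition of an open path.

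Next I would condition on the merged data $M=(B\cup\hat B,G,\tau\oplus\hat\tau,\D)$. By Poisson superposition and the independence of the fair $\tau$-coins, given $M$ each bridge of $B\cup\hat B$ is independently assigned to $B$ or $\hat B$ with probability $1/2$, and similarly for each $\tau$-bit; the two labellings are then determined by this splitting together with the sources. When $(\psi^{\ft_1}_{0\k},\hat\psi^{\ft_2}_\es)$ both have positive weight, the source constraint forces an odd cluster through $\{0,\k\}$ in the first labelling, which in turn guarantees an open path from $0$ to $\k$ off $\G$ in $M$. Pick one such path $\pi$ by a rule depending only on $M$ and the source set. Flipping the $B/\hat B$-assignment of every bridge used by $\pi$, together with the $\tau/\hat\tau$-value at each vertical arc of $\pi$ crossing time $0$, transfers the sources from $\psi$ to $\hat\psi$; the merged data $M$ is preserved, $\pi$ still witnesses $\{0\lra\k\mbox{ off }\G\}$, and the new pair $(\psi^{\ft_1}_\es,\hat\psi^{\ft_2}_{0\k})$ is automatically consistent. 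Summing over the fibres of $M$ yields the identity. The two admitted boundary combinations $(\ft_1,\ft_2)=(\f,\w)$ and $(\p,\p)$ are handled in parallel, the only difference being whether the ghost-jumps go through $\hat G$ or $G$.

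The main obstacle is showing that $\epsilon_1+\epsilon_2$, and hence the weight $e^{2\d(\epsilon_1+\epsilon_2)}$, is invariant under the involution. Along each subinterval of $\pi$ the two labellings exchange parities when the bridge- or $\tau$-assignment is flipped, so the local contribution to the total even length is unchanged; but this has to be verified carefully at the bridges traversed by $\pi$, where flipping one endpoint propagates the parity swap along an entire arc of the labelling, and at the spatial/temporal boundary, where the conventions $\f,\w,\p$ impose distinct endpoint rules and subtly affect the meaning of `off $\G$'. Once this invariance is in place, the summation over fibres produces exactly the indicator $\one\{0\lra\k\mbox{ off }\G\}$ appearing on the right-hand side.
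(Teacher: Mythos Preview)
Your approach is essentially the same as the paper's: condition on the merged bridge data $\ol B=B\cup\hat B$ and $G$, then swap the sources by switching along a chosen open path $\pi$. You also correctly set up the Poisson re-encoding of the weight via the cut process $\D$, which is indeed the device used in~\cite{bjogr}.

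However, your stated ``main obstacle'' and its proposed resolution contain a genuine error. Flipping the $B/\hat B$-assignment of a bridge toggles \emph{each} labelling at that bridge's endpoints; tracing this along $\pi$ shows that the operation \emph{toggles} both $\psi$ and $\hat\psi$ (even $\leftrightarrow$ odd in each separately) on $\pi$, rather than \emph{swapping} their labels. At a point of $\pi$ where both labellings happen to be even, after toggling both become odd --- the parities do not ``exchange'', and the local even length is not preserved. Concretely, $\epsilon_1+\epsilon_2$ changes by $2\bigl(|\pi\cap E_1\cap E_2|-|\pi\cap O_1\cap O_2|\bigr)$, which has no reason to vanish. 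What \emph{is} invariant under toggling is the symmetric difference $O_1\sd O_2$, while on $\pi$ the set $O_1\cap O_2$ is exchanged with $E_1\cap E_2$. The correct mechanism --- which your $\D$-re-encoding is in fact set up to capture, but which your explanation misidentifies --- is that the openness of $\pi$ (no cut of $\D$ in $E_1\cap E_2$ along $\pi$) combined with the indicator $\{\D\cap(O_1\cap O_2)=\es\}$ is what survives the toggling; this is precisely how the cut process absorbs the weight change. The paper phrases this compensation as ``the change in weight of the labellings exactly corresponds to the change in probability that $\pi_j$ is indeed the first open path'', deferring the verification to~\cite[p.~251]{bjogr}. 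A related subtlety you pass over: after toggling, the doubly-even region along $\pi$ changes, so the set of open paths changes and a rule for selecting $\pi$ ``depending only on $M$'' need not pick the same $\pi$ again; ensuring the map is a genuine bijection is part of the same balancing act.
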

Here we have abbreviated $\{0,\k\}$
with $0\k$.
The proof of Lemma~\ref{sw_lem} 
is a small modification 
of~\cite[Theorem~4.2]{bjogr}, 
which we briefly outline now.  
\begin{proof}[Proof sketch]
Note that
in the left-hand-side, the event 
$\{0\lra x\mbox{ off }\G\}$ holds since there is an
odd path in $\psi_{0\k}^{\ft_1}$.  Condition on the sets
$\overline B=B\cup\hat B$ and $G$, and note that the
conditional distribution of the pair $(B,\hat B)$
is given by assigning each element of $\ol B$
to $B$ or $\hat B$ with equal probability, independently.  
Given $\ol B$ and $G$ there is a finite collection of 
`possible' paths $\pi_1,\dotsc,\pi_n$ between $0$ and $\k$
off $\G$ (the numbering is arbitrary but fixed,
and as noted above the collection is nonempty).  
When we assign the 
elements of $\ol B$ to $B$ and $\hat B$ and also sample
$\D$ and (in the case $\ft_1=\ft_2=\p$) $\tau,\hat\tau$,
this will both determine the labellings 
$\psi^{\ft_1}_{0\k}$ and $\hat\psi_\es^{\ft_2}$, as
well as reveal which of the possible paths are indeed open.
Let $\pi_j$ be the first of these.  Write 
$\psi^{\ft_1}_{0\k}\sd \pi_j$ and 
$\hat\psi_\es^{\ft_2}\sd\pi_j$ for the labellings obtained
by switching `even' and `odd' along $\pi_j$.  
It is easy to see that these new labellings
are consistent with sources $\es$ and $0\k$,
respectively, and that they can be obtained
by switching certain elements between $B$ and $\hat B$ as well
as certain values of $\tau$ and $\hat\tau$.  By symmetry,
the latter transformations are measure-preserving.  Moreover,
it may be checked as in~\cite[p.~251]{bjogr} that the
change in \emph{weight} of the labellings exactly corresponds
to the change in \emph{probability} that $\pi_j$ 
is indeed the first open path.  In particular,
in the new configuration there is still a path
between 0 and $\k$ off $\G$.
\end{proof}

The following simple application of 
Lemma~\ref{sw_lem} will be used in the proof of
Theorem~\ref{main_thm}.     We have that
\begin{equation}\label{sw2}
\begin{split}
\EE_{N,r}(\partial\psi_\es^{\ft_1}\partial\hat\psi_{0\k}^{\ft_2})
-\EE_{N,r}(\partial\psi_{0\k}^{\ft_1}\partial\hat\psi_\es^{\ft_2})
&=\EE_{N,r}(\partial\psi_\es^{\ft_1}\partial\hat\psi_{0\k}^{\ft_2}
[1-\one\{0\lra \k\mbox{  off }\G\}])\\
&\leq 
\EE_{N,r}(\partial\psi_\es^{\ft_1}\partial\hat\psi_{0\k}^{\ft_2}
\one\{0\lra \G\}).
\end{split}
\end{equation}
In the last step we used that $\hat\psi_{0\k}^{\ft_2}$
contains an odd path $\pi$ between 0 and $\k$, and if there
is no path between 0 and $\k$ off $\G$ then $\pi$
must pass $\G$.
Combined with~\eqref{rpr_eq} this has the following
consequence.    Writing $0$ for $(0,0)$
and $\k$ for $(x,t)$ we have from~\eqref{sw2}
\begin{equation}\label{tp_rpr}
\begin{split}
0\leq 
\langle \s_{0,0}\s_{x,t} \rangle^{\w,\ft_2}_{N,r}
-\langle \s_{0,0}\s_{x,t} \rangle^{\f,\ft_1}_{N,r}
&\leq \frac{
\EE_{N,r}(\partial\psi_\es^{\ft_1}\partial\hat\psi_{0\k}^{\ft_2}
\one\{0\lra \G\})}
{\EE_{N,r}(\partial\psi_\es^{\ft_1}\partial\hat\psi_\es^{\ft_2})}.
\end{split}
\end{equation}

\subsubsection{Local modifications}
\label{local_sec}

Of central importance to the proof of 
Theorem~\ref{main_thm} is a probability measure
$\ol\PP_{N,r}$ defined as follows.  Recall the
processes $B,\hat B, G, \D,\tau, \hat \tau$
as well as the labellings $\psi,\hat\psi$
of the previous subsection.  If $A$ is an event
measurable with respect to these processes, we define
\begin{equation}\label{P_bar_def}
\ol\PP_{N,r}(A)=
\frac{\EE_{N,r}(\one_A\partial\psi_\es^{\ft_1}\partial\hat\psi_\es^{\ft_2})}
{\EE_{N,r}(\partial\psi_\es^{\ft_1}\partial\hat\psi_\es^{\ft_2})}.
\end{equation}
In this section we prove two estimates relating
to $\ol\PP_{N,r}$.  Before we state these, recall
our conventions:  if $\b<\oo$ then we write $r=\b$
and let $\ft_1=\ft_2=\p$, whereas if $\b=\oo$
then $r=2N\to\oo$ and $\ft_1=\f,\ft_2=\w$.  In the following
result we let $N_0<N$ and if $\b<\oo$ let $r_0=r=\b$,
or if $\b=\oo$ let $r_0=2N_0<2N=r$.  If $J$
is a measurable subset of $K(N,r)$ we say that the
event $A$ is \emph{defined in} $J$ if it is
measurable with respect to the restrictions of the 
processes $B,\hat B, G, \D,\psi,\hat\psi$ to $J$.
\begin{proposition}\label{local_prop}
\hspace{1cm}\\
(A) For each $(x,t)\in K(N,r)$ there is a constant 
$C_{(x,t)}$ not depending on $N,r$ such that
\[
\langle \s_{0,0}\s_{x,t} \rangle^{\w,\ft_2}_{N,r}
-\langle \s_{0,0}\s_{x,t} \rangle^{\f,\ft_1}_{N,r}
\leq C_{(x,t)}\ol\PP_{N,r}((0,0)\lra \G).
\]\\
(B) There is a constant $c=c(N_0,r_0)$ such that the
following holds.
Let $A$ be an event defined in $K(N,r)\sm K(N_0,r_0)$,
and let $\cC$ be the event that there is an open
path inside $K(N_0,r_0)$ between every pair of 
points in $K(N_0,r_0)$.  Then
\[
\ol\PP_{N,r}(A)\leq c \,\ol\PP_{N,r}(A\cap\cC).
\]
\end{proposition}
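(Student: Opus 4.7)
For Part~(A), from \eqref{tp_rpr} and writing $\k=(x,t)$, the claim reduces to
\begin{equation*}
\EE_{N,r}\bigl(\partial\psi_\es^{\ft_1}\partial\hat\psi_{0\k}^{\ft_2}\one\{0\lra\G\}\bigr)\le C_{(x,t)}\EE_{N,r}\bigl(\partial\psi_\es^{\ft_1}\partial\hat\psi_\es^{\ft_2}\one\{0\lra\G\}\bigr).
\end{equation*}
My plan is a finite-energy (``insertion tolerance'') argument for the $\hat B$-process. The labellings $\hat\psi_\es^{\ft_2}$ and $\hat\psi_{0\k}^{\ft_2}$ demand the same parity of switching points on every column of $\KK$ except at $0$ and $x$, where the required parities flip. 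Fix a nearest-neighbour path $0=y_0,y_1,\dotsc,y_m=x$ in $\ZZ^d$ with $m=\|x\|$, together with disjoint time intervals $I_i\se I_r$ of common width $\eps$ and midpoints $t_i$ increasing from $0$ to $t$, with $I_i$ on the edge $y_{i-1}y_i$. Conditioning on the restriction of $\hat B$ to the complement of the $I_i$, the $\hat B_{y_{i-1}y_i}\cap I_i$ are independent Poisson processes of rate $\l$; the events ``each $I_i$ empty'' and ``each $I_i$ contains exactly one bridge, uniformly placed'' are related by a measure-theoretic bijection of Radon--Nikodym cost $O((\l\eps)^{-m})$. Adding one bridge in each $I_i$ flips the parity of switching points at columns $0$ and $x$ by one and leaves the other column parities unchanged, so it identifies $\hat\psi_\es^{\ft_2}$-consistent with $\hat\psi_{0\k}^{\ft_2}$-consistent configurations. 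The weight ratio $e^{2\d\D\epsilon}$ is bounded since $|\D\epsilon|=O(m\eps)$, and the event $\{0\lra\G\}$ is preserved up to a bounded correction because adding bridges can only enhance connectivity, while the labelling shift is confined to $O(\eps)$-neighbourhoods of finitely many points and thus affects only a bounded number of cut-blockings. Choosing $\eps=\eps(x,t)$ yields the constant $C_{(x,t)}$.

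For Part~(B), condition on the restriction $\omega_{\rm out}$ of $(B,\hat B,G,\D,\tau,\hat\tau)$ to $K(N,r)\sm K(N_0,r_0)$. Since $A$ is $\omega_{\rm out}$-measurable, it is enough to exhibit a pointwise lower bound $\ol\PP_{N,r}(\cC\mid\omega_{\rm out})\ge 1/c$ with $c=c(N_0,r_0)$, for a.e.\ $\omega_{\rm out}$ compatible with $A$ having positive weight. Given $\omega_{\rm out}$, the conditional law of the interior $\omega_{\rm in}$ is Poisson on $K(N_0,r_0)$ reweighted by the internal part of $\partial\psi_\es^{\ft_1}\partial\hat\psi_\es^{\ft_2}$, which depends on $\omega_{\rm out}$ only through the parity/label data at $\pd K(N_0,r_0)$. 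I would exhibit an explicit ``canonical'' interior configuration $\omega_{\rm in}^*=\omega_{\rm in}^*(\omega_{\rm out})$: one bridge in each of $B_{xy},\hat B_{xy}$ at every point of a fine fixed time-grid in $I_{r_0}$ for each $xy\in\cE_{N_0}$; no cuts of $\D$ inside $K(N_0,r_0)$; plus at most one ``correction'' bridge per column on an edge incident to $\pd\L_{N_0}$, chosen to match the parity dictated by $\omega_{\rm out}$. This manifestly realises $\cC$, since from any interior point one traverses vertically to the grid and then hops along adjacent columns via grid bridges, all within $K(N_0,r_0)$. A small $L^\infty$-neighbourhood of $\omega_{\rm in}^*$ in Poisson configuration space has probability bounded below by $c'(N_0,r_0)>0$; on this neighbourhood consistency holds, $\partial\psi_\es^{\ft_1}\partial\hat\psi_\es^{\ft_2}$ admits a uniform positive lower bound, and $\cC$ still holds. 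Integrating gives the claimed constant $c$.

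The main obstacle in both parts is handling finite-energy accounting in the continuous Poisson setting, where no pointwise Radon--Nikodym densities exist and all modifications must be encoded as positive-probability events on small neighbourhoods. For Part~(A) the delicate point is that inserting bridges perturbs $\hat\psi$, which can unblock or newly block cuts of $\D$ through the ``both-even'' rule, so preservation of $\{0\lra\G\}$ up to a bounded multiplicative factor requires a careful local analysis near the $I_i$. For Part~(B) the subtlety lies in the universality of the canonical configuration with respect to $\omega_{\rm out}$: one must verify that at most one correction bridge per column suffices to meet every possible boundary-parity profile, while preserving both $\cC$ and positivity of the weight.
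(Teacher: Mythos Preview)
Your Part~(A) contains a real gap. Inserting a bridge on the edge $y_{i-1}y_i$ inside $I_i$ does \emph{not} perturb $\hat\psi$ only in an $O(\eps)$-neighbourhood: it adds switching points at $(y_{i-1},t_i)$ and $(y_i,t_i)$, and the label on each column then flips along the \emph{entire} interval between the new switching point and the adjacent one. On an intermediate column $y_j$ the two inserted switching points sit at $t_j$ and $t_{j+1}$, so the flipped region is $[t_j,t_{j+1}]$, of length $\approx |t|/m$; summing over columns (and including the source removals at the endpoints) the total flipped measure telescopes to order $|t|$, not $m\eps$. The weight bound survives with the corrected constant, but the connectivity claim does not: any point of $\D$ lying in this flipped region may switch from `$\hat\psi$-odd' to `$\hat\psi$-even' and hence become a blocking cut, and the number of such points is Poisson with mean of order $\d|t|$, not bounded. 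The paper closes this gap with a second, essential modification that you omit: it also \emph{deletes all points of $\D$} lying in a tube $\Pi(x,t)$ (of total length $\leq |t|+2+\|x\|$) designed to contain the entire region where $\hat\psi$ changes. With no cuts left in $\Pi$ the label change cannot create new blockings; moreover the paper's bridge operation (add one if $\leq1$ present, delete one if $\geq2$) guarantees at least one bridge survives on each step of $\Pi$, so $\Pi$ is internally fully connected after the modification and any path to $\G$ that used a deleted bridge can be re-routed through $\Pi$. Both extra modifications have Radon--Nikodym cost $e^{O(\d(|t|+\|x\|))}$.

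For Part~(B) your conditioning approach can probably be made to work but is unnecessarily delicate. The paper bypasses the boundary-parity matching entirely: it removes all cuts in $K(N_0,r_0)$ and, for each edge $xy\in\cE_{N_0}$ with $B_{xy}\cap I_{r_0}=\es$, inserts \emph{two} uniform bridges (leaving $B_{xy}$ alone otherwise). Adding zero or two bridges leaves every column parity unchanged, so the modified $B$ is automatically $\es$-consistent with the unchanged exterior, and~\eqref{add_0or2_RN} bounds the Radon--Nikodym cost. This avoids your ``correction bridge'' step, which as stated is problematic: a single correction bridge on an edge flips the parity at \emph{both} its endpoints, so ``at most one per column'' cannot in general match an arbitrary boundary-parity profile without a more careful path-based construction.
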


In proving Proposition~\ref{local_prop}
we will be using the following  fact about
`local modifications' of point processes.  Let
$X$ denote a point process 
on the interval $[0,t]$.  Let $\tilde X$ be another
point process on $[0,t]$ obtained from $X$
by a deterministic or random modification.  For example,
$\tilde X$ may be obtained by adding a point somewhere in
$[0,t]$, or deleting one of the points of $X$.
Write $E,\tilde E,\EE$ for the law of $X$, the
law of $\tilde X$ and their joint law, respectively.
We will assume that $\tilde X$ is defined in such 
a way that for some event $A$ we have
$\tilde X\in A$ with probability 1.  Moreover, assume
that $f$ is some function and $c_1,c_2>0$ some constants
such that $\tfrac{f(X)}{f(\tilde X)}\leq c_1$ and
the Radon-Nikodym density $\tfrac{d\tilde E}{d E}\leq c_2$
almost surely.  We then have that
\begin{equation}\label{modif_bound}
\begin{split}
E[f(X)]&= \EE[f(X)]=\EE[f(X)\one\{\tilde X\in A\}]
\leq c_1 \EE[f(\tilde X)\one\{\tilde X\in A\}]\\
&= c_1 \tilde E[f(\tilde X)\one_A(\tilde X)]=
c_1 E\Big[\frac{d \tilde E}{d E}(X)f(X)\one_A(X)\Big]\\
&\leq c_1c_2 E[f(X)\one\{X\in A\}].
\end{split}
\end{equation}
We will be using~\eqref{modif_bound} 
when $X$ is a Poisson process of intensity
$\a$, say, and $\tilde X$ is obtained in one of 
the following three ways.
\begin{list}{$\bullet$}{\leftmargin=1em}
\item Firstly, if $\tilde X$ is
the trivial process obtained by deleting all
points of $X$ then
\begin{equation}\label{delete_RN}
\frac{d\tilde E}{d E}(X)=e^{\a t}\one\{X=\es\}
\leq e^{\a t}.
\end{equation}
\item Secondly, suppose $\tilde X$ is obtained from $X$
by adding two independent, uniformly distributed
points if $X=\es$, but letting $\tilde X=X$ 
otherwise.  Then
\begin{equation}\label{add_0or2_RN}
\begin{split}
\frac{d\tilde E}{d E}(X)&=
\one\{X\neq\es\}+\tfrac{2}{(\a t)^2}\one\{|X|=2\}\\
&\leq 1+\tfrac{2}{(\a t)^2}.
\end{split}
\end{equation}
\item Thirdly, suppose $\tilde X$ is obtained from $X$
by adding uniformly a point if $|X|\in\{0,1\}$, 
alternatively deleting a uniformly chosen point if $|X|\geq2$.
Then 
\begin{equation}\label{add_del_RN}
\begin{split}
\frac{d\tilde E}{d E}(X)&=
\tfrac{1}{\a t}\one\{|X|=1\}
+\tfrac{2}{\a t}\one\{|X|=2\}+
\one\{X\neq\es\}\tfrac{\a t}{|X|+1}\\
&\leq \tfrac{2}{\a t}+\a t.
\end{split}
\end{equation}
\end{list}
One way to check~\eqref{delete_RN}--\eqref{add_del_RN}
is to approximate $X$ by a Bernoulli process 
on $\{0,\tfrac{1}{n},\dotsc,\tfrac{\lfloor tn\rfloor}{n}\}$
with success probability $\a/n$ and look at the
limits of the corresponding likelihood ratios.

\begin{proof}[Proof of Proposition~\ref{local_prop}]
We begin with the easier part (B).  The statement is 
equivalent to
\begin{equation}\label{in1}
\EE_{N,r}(\one_A\partial\psi_\es^{\ft_1}\partial\hat\psi_\es^{\ft_2})
\leq c\,
\EE_{N,r}(\one_A\one_\cC\partial\psi_\es^{\ft_1}\partial\hat\psi_\es^{\ft_2}).
\end{equation}
We modify $\D$ by removing all points in $K(N_0,r_0)$
and we modify $B$ inside $K(N_0,r_0)$ 
as in~\eqref{add_0or2_RN}.
That is, whenever $B_{xy}\cap I_{r_0}=\es$ we
add two bridges uniformly placed in $I_{r_0}$, otherwise
leave $B_{xy}$ unchanged.  The resulting bridge-configuration 
$\tilde B$
is then still consistent with source set $\es$.
If $\b=\oo$ there is (due to our choice of boundary
conditions $\ft_1=\f$, $\ft_2=\w$) a unique labelling
$\tilde \psi_\es^{\ft_1}$ consistent with $\tilde B$ 
which agrees with the original labelling $\psi_\es^{\ft_1}$
in $K(N,r)\sm K(N_0,r_0)$.
If $\b<\oo$ there is a unique
labelling $\tilde \psi_\es^{\ft_1}$ which agrees with 
$\psi_\es^{\ft_1}$ in $K(N,r)\sm K(N_0,r_0)$
and at each $(x,0)$ such that $x\in\L_{N_0}$.
Since we have removed all cuts and placed
bridges between all neighbouring pairs of intervals,
the event $\cC$ holds after the modification.
Since all changes have been restricted to
$K(N_0,r_0)$ the change preserves the event $A$.
The total length $\tilde\epsilon$ of intervals labelled
`even' in $\tilde \psi_\es^{\ft_1}$ satisfies
$\tilde\epsilon\geq\epsilon-r_0(2N_0+1)^d$.
Applying~\eqref{modif_bound} with 
$f$ equal to the weight of the labelling,
as well as~\eqref{delete_RN} and~\eqref{add_0or2_RN}, we 
obtain~\eqref{in1} with
\[
c=\exp(4\d r_0(2N_0+1)^d)
\exp(4\d r_0(2N_0+1)^d)
(1+2/(\l r_0)^2)^{2d(2N_0+1)^d}.
\]
(The first factor is due to the change in 
the weight of the labelling,
the second to the change of measure of $\D$,
and the third to the change of measure of $B$.)

We now turn to part (A).  By~\eqref{tp_rpr}
it suffices to show that
\begin{equation}\label{Cx_eq}
\EE_{N,r}(\partial\psi_\es^{\ft_1}\partial\hat\psi_{0\k}^{\ft_2}
\one\{0\lra \G\})\leq C_\k
\EE_{N,r}(\partial\psi_\es^{\ft_1}\partial\hat\psi^{\ft_2}_\es
\one\{0\lra \G\}),
\end{equation}
where $\k=(x,t)$.
We begin with the (more delicate) case when 
$\b=\oo$ and $r=2N$.  

For each $y\in\L_N$ write
\begin{equation*}
I_{y,k}=\{y\}\times(k,k+1],\qquad -N\leq k\leq N-1.
\end{equation*}
Thus the $I_{y,k}$ form a partition of $K(N,r)$ into
intervals of length 1, and we have that
$(x,t)\in I_{x,\lceil t\rceil -1}$ and 
$(0,0)\in I_{0,-1}$.  We begin by defining a collection 
$\Pi(x,t)$ of intervals of this form which `connect'
$(0,0)$ to $(x,t)$.  There is some flexibility
in the choice of $\Pi(x,t)$, but for definiteness we 
define it as follows.  Firstly, let 
$0=x_0,x_1,\dotsc,x_n=x$ be a fixed, shortest nearest-neighbour
path from $0$ to $x$ in $\L_N$;  thus
$n=\|x\|$.  Next, set $k_0=-1$ and
\begin{equation*}\begin{split}
k_1&=0, k_2=1,\dotsc,k_m=\lceil t \rceil-1\qquad
\mbox{ if } t>0,\\
k_1&=-2, k_2=-3,\dotsc,k_m=\lceil t \rceil-1\qquad
\mbox{ if } t<0.
\end{split}
\end{equation*}
We define
\begin{equation*}
\Pi(\kappa)=\Pi(x,t)=\{I_{0,k_0},I_{0,k_1},\dotsc,I_{0,k_m},
I_{x_1,k_m},I_{x_2,k_m},\dotsc, I_{x_n,k_m}\}.
\end{equation*}
Note that $(0,0)\in I_{0,k_0}$, that
$(x,t)\in I_{x_n,k_m}$, and that the number of intervals
in $\Pi(x,t)$ as well as their total length
are bounded by $|t|+2+\|x\|$.  

We are going to modify $\D$, $\hat B$ and $\hat\psi$
along $\Pi(x,t)$
and apply the argument at~\eqref{modif_bound}. 
We modify $\D$ 
by simply replacing $\D\cap I$ with $\es$ for all
$I\in\Pi(x,t)$.  By~\eqref{delete_RN} the
corresponding Radon--Nikodym density is at most
$\exp(4\d(|t|+2+\|x\|))$.
Next, define
\begin{equation*}
J_i=\{(x_ix_{i+1},s):(x_i,s)\in I_{x_i,k_m},\,
(x_{i+1},s)\in I_{x_{i+1},k_m}\}.
\end{equation*}
We modify $\hat B$ by applying the operation 
in~\eqref{add_del_RN} in each 
$J_i$ for $0\leq i\leq n-1$;
that is, if $J_i$ contains
0 or 1 bridge we add one uniformly, but if $J_i$ contains
2 or more bridges we delete one chosen uniformly.   
Write $\tilde B$ for the modified process of bridges.
By~\eqref{add_del_RN}, the
corresponding Radon--Nikodym density is at most
$(2/\l+\l)^{|t|+2+\|x\|}$.

If $\hat B$ was 
consistent with the sources $0,\k$
then $\tilde B$ is consistent with the source
set $\es$.  Due to the boundary
condition there is a unique labeling $\tilde\psi_\es^{\w}$
associated with $\tilde B$, and in fact
one obtains $\tilde\psi_\es^\w$ from
$\hat\psi_{0\k}^\w$ by modifying the labels in intervals
belonging to $\Pi(x,t)$ only.
See Figure~\ref{local_fig}.
\begin{figure}[tbp]
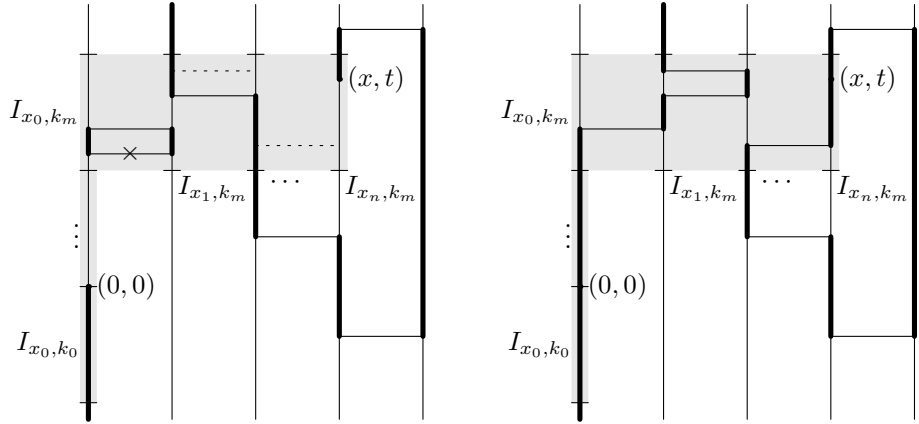

\centering
\includegraphics{critmag.1}
\qquad\includegraphics{critmag.2}
\caption
{Part of the labellings $\hat\psi_{0\kappa}^\w$ (left) and 
$\tilde\psi_{\es}^\w$ (right) in the proof of  
Proposition~\ref{local_prop}(A).  The intervals comprising
$\Pi(\kappa)$ are indicated and highlighted in light
grey.  Intervals labelled `odd' are drawn bold.  Bridges
added (respectively, deleted) are drawn dashed 
(respectively, marked with an $\times$).}
\label{local_fig}   
\end{figure}
It follows that
\begin{equation*}
\frac{\partial\hat\psi_{0\k}^\w}
{\partial\tilde\psi_\es^\w}\leq 
\exp(2\d(|t|+2+\|x\|)).
\end{equation*}
Note that the modifications described above
do not destroy any connections 
(but possibly creates some
new ones).  In particular, if
$0\lra\G$ before
then also $0\lra\G$ after.
Applying the argument in~\eqref{modif_bound}
we therefore arrive at~\eqref{Cx_eq}, with
\[
C_{(x,t)}=\exp(6\d(|t|+2+\|x\|))
(2/\l+\l)^{|t|+2+\|x\|}.
\]

We now turn to the case $\b<\oo$,
and recall that we then have $r=\b$.  We no longer
need to partition $I_\b$ into intervals of length 1,
but instead  define $I_x=\{x\}\times I_\b$.  
We now let 
\begin{equation*}
\Pi(x,t)=\{I_{x_0},I_{x_1},I_{x_2},\dotsc, I_{x_n}\},
\end{equation*}
where as before
$0=x_0,x_1,\dotsc,x_n=x$ is a fixed, shortest, nearest-neighbour
path from $0$ to $x$ in $\L_N$.  The number of 
intervals in $\Pi(x,t)$ is now $n=\|x\|$, and their
combined length is $\b\|x\|$.
With this definition of $\Pi(x,t)$
we apply the same modifications to $\D$ and $\hat B$
as in the case $\b=\oo$.  Now we  
let $\tilde\psi_\es^\p$ be the unique labelling which
agrees with $\hat\psi_{0\k}^\p$ outside the intervals
of $\Pi(x,t)$ and at all points $(x_i,0)$ for
$i=0,1,\dotsc,n$.
This time we get~\eqref{Cx_eq} with
\[
C_{(x,t)}=\exp(6\d\b\|x\|)(2/\l\b+\l\b)^{\|x\|}.
\]
\end{proof}

\section{Infinite-volume RPR}
\label{infvol_sec}

In this section we study the limit $\ol\PP$
of the measures $\ol\PP_{N,r}$ as $N\to\oo$
(and either $r=\b<\oo$ is fixed, or $r=2N\to\oo$).
In Section~\ref{inf_rpr_sec} we prove the existence
of $\ol\PP$ as well as basic properties such
as translation-invariance and ergodicity.  Then in
Section~\ref{perc_sec} we show how the 
argument of Burton and Keane~\cite{burton-keane}
can be adapted to show that, almost surely under
$\ol\PP$, there is either no or exactly one
infinite connected cluster.

\subsection{Existence and basic properties}
\label{inf_rpr_sec}

In proving existence of the limit $\ol\PP$
of the sequence $\ol\PP_{N,r}$ we will 
need to pay attention to
the underlying point processes,
and we will loosely follow the notational
conventions of Daley and Vere-Jones~\cite{dvj}
for point processes.  Recall that the
labelling $\psi_\es^{\ft_1}$ is a function of the
pair $(B,\tau)$ where $B$ is a point process
on $\cE_N\times I_r$ and $\tau\in\{0,1\}^{\L_N}$,
and similarly $\hat\psi_\es^{\ft_2}$ is a function of
$(\hat B,\hat G, \hat\tau)$.  
(If $\b=\oo$ then $\tau,\hat\tau$ are redundant
due to the boundary conditions.)  Write 
\[
\cX=\cX_\b= \Big[\bigcup_{i=0}^d(\ZZ^d+\tfrac{1}{2}e_i)\Big]
\times I_\b,\qquad
T=\{0,1\}^{\ZZ^d},
\]
where $e_0$ is the zero vector and, for $i\neq0$,
$e_i$ is the unit vector in the $i$:th coordinate.
Writing $\cN=\cN_\cX$ for the set of boundedly
finite point processes 
(counting measures) on $\cX$ (denoted $\cN_\cX^{\#}$
in~\cite{dvj}), an obvious mapping
allows us to see both $B$ and $\hat B\cup\hat G$
as random elements of $\cN$.
The measure $\ol\PP_{N,r}$ factorizes as
\begin{equation}\label{P_decomp}
\overline\PP_{N,r}=\PP_{N,r}^{\ft_1}\times 
\hat\PP_{N,r}^{\ft_2}\times P_\D,
\end{equation}
where $P_\D$ is the law of $\D$ and 
$\PP_{N,r}^{\ft_1}$, $\hat\PP_{N,r}^{\ft_2}$
are the measures on $\cN\times T$ governing
$(B,\tau)$ and $(\hat B\cup \hat G,\hat\tau)$
respectively, given by
\[
\frac{d\PP_{N,r}^{\ft_1}}{dE_{N,r}}=
\frac{\partial\psi_\es^{\ft_1}}
{E_{N,r}(\partial\psi_\es^{\ft_1})},\qquad
\frac{d\hat\PP_{N,r}^{\ft_2}}{d\hat E_{N,r}}=
\frac{\partial\hat\psi_\es^{\ft_2}}
{\hat E_{N,r}(\partial\hat\psi_\es^{\ft_2})}.
\]
Write $\Om=(\cN\times T)^2\times\cN$ and note
that $\cN$, and hence also $\Om$, is a
complete and separable metric 
space~\cite[Proposition~9.1.IV]{dvj}.
We equip $\Om$ with the Borel $\s$-algebra
$\cF$, which
coincides with the $\s$-algebra generated by 
finite-dimensional distributions.
Recall that $r$ is either
fixed (if $\b<\oo$) or $r=2N$ (if $\b=\oo$).
\begin{proposition}\label{weak_prop}
The measures $\ol\PP_{N,r}$ converge 
weakly  to a probability measure $\ol\PP$
on $\Om$ as $N\to\oo$.  
\end{proposition}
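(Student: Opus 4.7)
The plan is to exploit the factorization~\eqref{P_decomp}:
\[
\ol\PP_{N,r}=\PP^{\ft_1}_{N,r}\times\hat\PP^{\ft_2}_{N,r}\times P_\D.
\]
Since $P_\D$ is a fixed Poisson-cut law which does not depend on $N$, and since weak convergence respects products on Polish spaces, it suffices to establish weak convergence of each of the two reweighted point-process factors on $\cN\times T$. The marginal on $T=\{0,1\}^{\ZZ^d}$ is automatically tight as $T$ is compact, so the task reduces to weak convergence of the marginals on $\cN$. As $\Om$ is Polish, Prokhorov's theorem splits this into (i) tightness of the sequence, and (ii) uniqueness of subsequential limits.

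For tightness, I would bound, uniformly in $N$, the expected number of points of $B$ (respectively $\hat B\cup \hat G$) in any bounded Borel set $J\subseteq\cX$. The cleanest route uses the Campbell--Mecke formula for the underlying Poisson reference $E_{N,r}$ together with the random-parity identity~\eqref{rpr_eq}: conditioning on a bridge lying at $(xy,t)$ turns a source-free configuration into another source-free one with two additional switching points at $(x,t)$ and $(y,t)$, and the ratio of expected weights can be related, by an argument parallel to the proof sketch of Lemma~\ref{sw_lem}, to a correlation function of the form $\l\langle\s_{x,t}\s_{y,t}\rangle^{\f,\ft_1}_{N,r}$, which is bounded by $\l$. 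Integration over $J$ then gives a uniform first-moment bound, and a similar argument on higher factorial moments (via iterated Mecke) controls the tails of $|B\cap J|$ uniformly in $N$.

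For uniqueness of the subsequential limit, I would observe that the Borel $\s$-algebra on $\cN$ is generated by void events of the form $\{N(J)=0\}$ for bounded Borel $J$, so by inclusion--exclusion it is enough to show convergence of the joint void probabilities. Such probabilities can be expressed, via the hole formula~\eqref{holes_eq}, as ratios
\[
e^{2\d|J|}\,E_{K\sm J}(\partial\psi^{\ft_1}_\es)\big/E_{N,r}(\partial\psi^{\ft_1}_\es)
\]
of modified partition functions, which in turn can be rewritten through the spin representation~\eqref{strn_eq} (after integrating out the bridges inside $J$) as expectations under the spin measures $\mu^{\f,\ft_1}_{N,r}$. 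The existence of the weak limits of these spin measures, recalled in Section~\ref{spin_sec}, then guarantees a unique limit for the void probabilities, which together with tightness identifies $\ol\PP$.

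I expect the main obstacle to be the tightness step. The density $\partial\psi^{\ft_1}_\es/E_{N,r}(\partial\psi^{\ft_1}_\es)$ with respect to the Poisson reference is globally defined and is not locally bounded in any trivial way: a local modification of bridges inside a bounded set can flip the labelling of $\psi^{\ft_1}_\es$, and hence change the total even length $\epsilon$, along arbitrarily long vertical lines. This is a manifestation of the `continuous nature of the graphical representations' flagged in the introduction, and the resolution requires a genuinely probabilistic identification of the bridge intensity under $\PP^{\ft_1}_{N,r}$ with a bounded spin correlation, rather than a naive Radon--Nikodym comparison.
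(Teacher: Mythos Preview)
Your overall architecture is the same as the paper's: factorize via~\eqref{P_decomp}, treat $P_\D$ as fixed, and for each of the two reweighted factors on $\cN\times T$ prove tightness plus uniqueness of subsequential limits via a generating $\pi$-system tied to the spin measures.

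For tightness your route differs from the paper's and is in fact cleaner. The paper establishes the uniform first-moment bound $\PP^{\ft_1}_{N,r}[|B\cap F(N_0,r_0)|]\le C(N_0,r_0)$ by a hands-on decomposition of the consistency event into parts inside and outside $K(N_0,r_0)$, writing $\cC=\cC'\cap\cC^{(0)}\cap\tilde\cC$ and bounding the resulting conditional probabilities separately. Your Mecke argument sidesteps this entirely: since adding a bridge $b=(xy,t)$ gives $\partial\psi^{\ft_1}_\es(B\cup\{b\})=\partial\psi^{\ft_1}_{(x,t)(y,t)}(B)$, Mecke's formula together with~\eqref{rpr_eq} yields directly
\[
\PP^{\ft_1}_{N,r}\big[|B\cap F_0|\big]
=\l\int_{F_0}\langle\s_{x,t}\s_{y,t}\rangle^{\f,\ft_1}_{N,r}\,d(xy,t)\le \l\,|F_0|,
\]
uniformly in $N,r$; for the hatted factor, an added point of $G$ at $(x,t)$ similarly produces $\langle\s_{x,t}\rangle^{\w,\ft_2}_{N,r}\le1$. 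A first-moment bound already suffices for the tightness criterion, so the higher factorial moments you mention are unnecessary. Your closing paragraph correctly diagnoses why a naive Radon--Nikodym comparison would fail, but the Mecke identity avoids that issue, so the anticipated obstacle does not in fact arise.

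One point in the uniqueness step needs correction. The ratio $e^{2\d|J|}E_{K\sm J}(\partial\psi^{\ft_1}_\es)\big/E_{N,r}(\partial\psi^{\ft_1}_\es)$ furnished by~\eqref{holes_eq} is \emph{not} a bridge void probability: there $J$ lies in the vertex space $K$, and the event computed is $\{\psi^{\ft_1}_\es\text{ is `even' throughout }J\}$, which also constrains the labelling at the endpoints of $J$, not merely the absence of switching points in its interior. The paper takes exactly these events as its $\pi$-system $\cA_0$ (noting that $B$ can be recovered from the labelling, so $\cA_0$ generates the full $\sigma$-algebra) and obtains the spin-representation formula~\eqref{LJ_eq}. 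You may either switch to that $\pi$-system, or keep genuine bridge void events and derive the analogous spin formula by deleting the coupling on $J'\subseteq\FF$; either way the limit is identified through the known convergence of $\mu^{\f,\ft_1}_{N,r}$.
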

\begin{proof}
The measure $P_\D$ does not depend on $N$, 
so by~\cite[Theorem~2.8]{billingsley-conv}
convergence of $\ol\PP_{N,r}$ follows once we 
show convergence of $\PP_{N,r}^{\ft_1}$
and $\hat\PP_{N,r}^{\ft_2}$.  We give full details for the
case of $\PP_{N,r}^{\ft_1}$, the case of 
$\hat\PP_{N,r}^{\ft_2}$ is similar.  We first show that
the sequence $\PP_{N,r}^{\ft_1}$ is tight, i.e.\ every
subsequence can be refined to a further subsequence 
along which $\PP_{N,r}^{\ft_1}$ converges.  We then show
that there is a $\pi$-system $\cA_0$ such that
the limit of $\PP_{N,r}^{\ft_1}(A)$ exists for each
$A\in\cA_0$.  The result then follows in 
the standard way from Prohorov's theorem.

Turning to the tightness of $\PP_{N,r}^{\ft_1}$,
note that since $T$ is compact it suffices
to show that the marginal of $\PP_{N,r}^{\ft_1}$
on $\cN$ is tight.  A criterion for this
is given in~\cite[Proposition~11.1.VI]{dvj}.
Fix $N_0,r_0$ and write $X=|B\cap F(N_0,r_0)|$
for the number of points of $B$ `inside'
$K(N_0,r_0)$.  Tightness follows if we show 
that for for each $\eps>0$ there is $m$
such that
\[
\PP_{N,r}^{\ft_1}(X>m)<\eps\mbox{ for all } N,r.
\]
By Markov's inequality it suffices to show that 
there is a constant $C(N_0,r_0)$ not depending
on $N,r$ such that the expectation
\begin{equation}\label{tight_ineq}
\PP^{\ft_1}_{N,r}[X] \leq C(N_0,r_0)
\mbox{ for all } N,r.
\end{equation}
In proving~\eqref{tight_ineq} we will require
the following notation.  Write 
$K=K(N,r)$, $K_0=K(N_0,r_0)$, $K'=K\sm K_0$, 
$F=F(N,r)$, $F_0=F(N_0,r_0)$, $F'=F\sm F_0$,
$B'=B\cap F'$, and $B^{(0)}=B\cap F_0$.
For briefer notation write $\psi$ for
the labelling $\psi_\es^{\ft_1}$, and let 
$\psi'$ denote the restriction of $\psi$
to $K'$.  Recall that $\epsilon$ denotes
the total Lebesgue measure of $K$ labelled
`even' in $\psi$, and write $\epsilon'$
for the  Lebesgue measure of the `even'
subset of $K'$.  Letting $\cC$ denote the event 
that $B$ is consistent (with source set $\es$) 
we have that $\partial\psi=e^{2\d\epsilon}\one_\cC$.
Clearly
$\epsilon'\leq \epsilon\leq \epsilon'+|K_0|$,
where $|K_0|=(2N_0+1)^dr_0$ denotes the total
Lebesgue measure of $K_0$, and it follows that
\begin{equation}\label{E_ratio}
\PP^{\ft_1}_{N,r}[X]\leq
e^{2\d |K_0|}\frac{E_{N,r}(X e^{2\d \epsilon'}\one_\cC)}
{E_{N,r}(e^{2\d \epsilon'}\one_\cC)}.
\end{equation}
The difficulty lies in the fact that although
$X$ is a function of $B^{(0)}$ only,
both $\epsilon'$ and $\cC$ depend both on both
$B^{(0)}$ and $B'$.  To `separate' this dependence,
we introduce the notation
\[
Z_{xy}=|B_{xy}\cap I_r|,\quad Z^{(0)}_{xy}=|B_{xy}\cap I_{r_0}|,
\quad Z'_{xy}=|B_{xy}\sm I_{r_0}|,\quad (xy\in\cE_N).
\]
Thus $Z_{xy}=Z^{(0)}_{xy}+Z'_{xy}$,
and the $Z^{(0)}_{xy},Z'_{xy}$ are independent Poisson
random variables under $E_{N,r}$
(with parameters $\l r_0$ and $\l(r-r_0)$, 
respectively).  The number  of switching points
on $\{x\}\times I_r$ can be written
\[
|S_x|=\sum_{\substack{y\in \L_N\\y\sim x}}Z_{xy},
\]
and letting $S^{(0)}_x$ denote the set of switching
points in $\{x\}\times I_{r_0}$ we can similarly write
\[
|S^{(0)}_x|=\sum_{\substack{y\in \L_{N_0}\\y\sim x}}Z^{(0)}_{xy}.
\]
The process $B'$ imposes parity constraints on
$B^{(0)}$ which can be described in terms of the
random vector $\pi=(\pi_x:x\in\L_{N_0})\in\{0,1\}^{\L_{N_0}}$
given by 
\begin{equation*}
\pi_x\equiv \sum_{\substack{y\not\in\L_{N_0}\\y\sim x}} Z^{(0)}_{xy}
+\sum_{\substack{y\in \L_N\\y\sim x}} Z'_{xy}.
\end{equation*}
(Here and in what follows we write $\equiv$ for
congruence modulo 2.)  Note that $\pi$ is a 
function of $B'$ only, and that
$\cC=\cC'\cap\cC^{(0)}\cap\tilde\cC$, where
\[
\begin{split}
\cC'&=\{|S_x|\equiv 0\;\forall x\in\L_N\sm\L_{N_0}\},\\
\cC^{(0)}&=\{|S_x^{(0)}|\equiv \pi_x\;\forall x\in\L_{N_0}\},\\
\tilde\cC&=
\Big\{\exists\, z\in\{0,1\}^{\cE_{N_0}}:\forall x\in\L_{N_0},
\sum_{\substack{y\sim x\\y\in\L_{N_0}}} z_{xy}\equiv \pi_x\Big\}.
\end{split}
\]
Strictly speaking the event $\tilde\cC$ is redundant
as it is implied by $\cC^{(0)}$, however
it is useful to keep since it, in contrast to $\cC^{(0)}$,
depends on $B'$ only.  For each realization of $\pi$
such that $\tilde\cC$ holds we fix a deterministic
vector $z$ as in the definition of $\tilde\cC$.
Note that the number of possible $\pi$ is at most
$2^{|\L_{N_0}|}$.

In the numerator of~\eqref{E_ratio} we have
\[\begin{split}
E_{N,r}(Xe^{2\d\epsilon'}\one_{\cC})&=
E_{N,r}\big(e^{2\d\epsilon'}\one_{\cC'\cap\tilde\cC}
E_{N,r}(X\one_{\cC^{(0)}}\mid\psi')\big)\\
&\leq E_{N,r}(X)E_{N,r}(e^{2\d\epsilon'}\one_{\cC'\cap\tilde\cC}),
\end{split}\]
where we bounded $\one_{\cC^{(0)}}$ by 1 and used the fact
that $X$ is independent of $\psi'$.
Note that $E_{N,r}(X)=\l r_0|\cE_{N_0}|$.
In the denominator of~\eqref{E_ratio} we have
\[
E_{N,r}(e^{2\d\epsilon'}\one_{\cC})=
E_{N,r}\big(e^{2\d\epsilon'}\one_{\cC'\cap\tilde\cC}
P_{N,r}(\cC^{(0)}\mid \psi')\big),
\]
which we need to bound from below.
We claim that there is an $\eps=\eps(N_0,r_0)>0$
such that $P_{N,r}(\cC^{(0)}\mid \psi')\geq \eps$
for all realizations $\psi'$ such that $\tilde\cC$
holds.  Indeed, recall that we fixed a deterministic
vector $z$ for each $\pi$ such that $\tilde\cC$ holds.
The event 
$\tilde\cC_z=\{Z^{(0)}_{xy}=z_{xy}\forall xy\in \cE_{N_0}\}$
thus implies $\cC^{(0)}$.   Under 
$P_{N,r}(\cdot\mid \psi')$ the $Z^{(0)}_{xy}$
are independent Poisson random variables,
so each $\tilde\cC_z$ has positive probability.
The claim thus holds with $\eps$
being the minimum of $P_{N,r}(\tilde\cC_z\mid \psi')$
over the (at most $2^{|\L_{N_0}|}$) 
choices of $z$.
Therefore~\eqref{tight_ineq} follows with
\[
C(N_0,r_0)=e^{2\d |K_0|}\l r_0|\cE_{N_0}|/\eps(N_0,r_0).
\]

Having proved tightness of the sequence $\PP_{N,r}^{\ft_1}$
we now turn to showing uniqueness of subsequential
limits.  Let $\cA_0$ denote the
collection of events of the form 
\begin{equation*}
A=\{\psi\mbox{ is `even' in } J\},
\end{equation*}
where $J$ is any finite union of bounded 
closed intervals
in $\KK_\b$.   
(We allow intervals of length 0, i.e.\ isolated
points.)
Then $\cA_0$ is a  $\pi$-system which
generates the $\s$-algebra $\cF$
(note that the process $B$ can be recovered 
from the labelling $\psi$).
We let $N$ be large enough that $J\se K(N,r)$.
By~\eqref{holes_eq} and~\cite[Lemma~3.2]{bjogr}
we have  
\begin{equation}\label{LJ_eq}
  \PP^{\ft_1}_{N,r}(A)=c(J)\mu_{N,r}^{\f,\ft_1}
\big[\exp\big(-\l L_J(\s)\big)\big],
\end{equation}
for some constant $c(J)$ depending only on $J$,
and where
\begin{equation*}
L_J(\s)=\sum_{xy\in \cE_N}
\int_{I_r}\s(x,t)\s(y,t)
\one\{(xy,t)\in \tilde J\}dt
\end{equation*}
and $\tilde J$ is the
set of point $(yz,t)\in\FF$ such that 
$(y,t)\in J$ or $(z,t)\in J$ (or both).
By~\cite[Theorem~2.5.1]{bjo_phd} 
the sequence of measures $\mu^{\f,\ft_1}_{N,r}$
converges weakly, hence the probability 
in~\eqref{LJ_eq} converges.
\end{proof}

For later reference we note that the constant $c(J)$
in~\eqref{LJ_eq} can be written as
\begin{equation}\label{cJ_eq}
c(J)=2^{-n}e^{\d|J|+\l|\tilde J|}.
\end{equation}
Here $|J|$ and $|\tilde J|$ denote the total length
of the intervals comprising $J$ and $\tilde J$,
and $n$ is the difference between the number
of intervals comprising $K'=K\sm J$ and 
the number of intervals comprising $K$
(in counting the number of intervals we view
$I_\b$ as a circle when $\ft=\p$).

For $(x,s)\in\KK_\b$ define the translation
or shift $\tau_{(x,s)}:\KK_\b\to\KK_\b$ by
$\tau_{(x,s)}(y,t)=(y+x,t+s)$
where in the case $\b<\oo$ we view $t+s$
modulo $\b$.  For a function $\zeta:\KK_\b\to\RR$
(e.g.\ a labelling $\psi$
or a spin-configuration $\s$) we define 
$\tau_{(x,s)}(\zeta)$
by $[\tau_{(x,s)}(\zeta)](y,t)=\zeta(y+x,t+s)$.
We write $\tau_x$ for $\tau_{(x,0)}$.
\begin{proposition}\label{ergodic_prop}
The measure $\ol\PP$ is invariant with respect
to the shifts $\tau_{(x,s)}$, and
ergodic with respect to the shifts $\tau_x$
for $x\neq0$.
\end{proposition}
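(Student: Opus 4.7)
The plan is to derive both properties from corresponding statements about the underlying space--time spin measures, via the representation in~\eqref{LJ_eq}. Recalling the factorization $\ol\PP = \PP^{\ft_1} \times \hat\PP^{\ft_2} \times P_\D$ from~\eqref{P_decomp}, the Poisson factor $P_\D$ is manifestly translation invariant and strongly mixing in all directions, so the task reduces to analyzing the marginals $\PP^{\ft_1}$ and $\hat\PP^{\ft_2}$. By symmetry I will describe the argument only for $\PP^{\ft_1}$; the case of $\hat\PP^{\ft_2}$ is entirely analogous.

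For translation invariance, it suffices by a monotone class argument to check invariance of $\PP^{\ft_1}$ on the $\pi$-system $\cA_0$ of events $A = \{\psi\text{ `even' in } J\}$ used in the proof of Proposition~\ref{weak_prop}. Passing~\eqref{LJ_eq} to the limit along $N\to\oo$ yields
\[
\PP^{\ft_1}(A) = c(J)\, \mu^{(\f,\ft_1)}_{\l,\b}\bigl[e^{-\l L_J(\s)}\bigr],
\]
with $c(J)$ given by~\eqref{cJ_eq}. The constant $c(J)$ depends only on $|J|$, $|\tilde J|$ and an interval-count, all invariant under $\tau_{(x,s)}$, and one checks directly that $L_{\tau_{(x,s)}(J)}(\s) = L_J(\tau_{-(x,s)}\s)$. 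Invariance of $\PP^{\ft_1}$ under $\tau_{(x,s)}$ thus reduces to translation invariance of the infinite-volume space--time spin measure $\mu^{(\f,\ft_1)}_{\l,\b}$, which is a standard consequence of the thermodynamic-limit construction in Section~\ref{spin_sec}.

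For ergodicity under $\tau_x$ with $x\neq 0$ I will prove the stronger property of mixing. For $A,A' \in \cA_0$ based on bounded regions $J,J'$, the sets $J$ and $\tau_{nx}(J')$ are disjoint for all sufficiently large $n$, and the additive structure of~\eqref{cJ_eq} then gives $c(J \cup \tau_{nx}(J')) = c(J)\,c(J')$. Mixing for $\PP^{\ft_1}$ therefore reduces to
\[
\mu^{(\f,\ft_1)}_{\l,\b}\bigl[e^{-\l L_J(\s) - \l L_{\tau_{nx}(J')}(\s)}\bigr]
\longrightarrow
\mu^{(\f,\ft_1)}_{\l,\b}\bigl[e^{-\l L_J(\s)}\bigr]\,\mu^{(\f,\ft_1)}_{\l,\b}\bigl[e^{-\l L_{J'}(\s)}\bigr]
\]
as $n \to \oo$, which is precisely the content of Lemma~\ref{mix_lem} applied to the bounded continuous functional $\s \mapsto e^{-\l L_J(\s)}$ and its translate. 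Combining the mixing of $\PP^{\ft_1}$, $\hat\PP^{\ft_2}$ and $P_\D$ on the generating product $\pi$-system and extending in the usual way yields mixing, hence ergodicity, of $\ol\PP$ under $\tau_x$. The main obstacle is thus deferred to Lemma~\ref{mix_lem}: establishing spatial mixing for the four infinite-volume space--time spin measures obtained from the `extremal' boundary conditions. This is the analogue of the classical fact that Gibbs states obtained as thermodynamic limits with $+$ or free boundary conditions are extremal, and hence ergodic and mixing under lattice translations.
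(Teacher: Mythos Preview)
Your proof is correct and follows essentially the same route as the paper's: reduce to the factors via~\eqref{P_decomp}, check invariance and mixing on the $\pi$-system $\cA_0$ using the representation~\eqref{LJ_eq}--\eqref{cJ_eq}, and defer the spatial mixing of the space--time spin measures to Lemma~\ref{mix_lem}. The only cosmetic difference is that you pass to the infinite-volume identity $\PP^{\ft_1}(A)=c(J)\mu^{(\f,\ft_1)}_{\l,\b}[e^{-\l L_J(\s)}]$ first and then argue, whereas the paper stays at finite volume and takes limits at the end; the content is identical.
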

\begin{proof}
We use the decomposition~\eqref{P_decomp}.
The measure $P_\D$ is translation-invariant
and ergodic, so it suffices to show that the weak 
limits of $\PP_{N,r}^{\ft_1}$ and  
$\hat\PP_{N,r}^{\ft_2}$ and are translation-invariant 
and ergodic.  Again, we give details for
$\PP_{N,r}^{\ft_1}$.  Let $A\in\cA_0$,
let $(x,s)\in\KK_\b$, and let $N$ be large enough 
that $\tau_{(x,s)}^{-1}J\se K(N,r)$
(recall that $r=2N$ if $\b=\oo$ and $r=\b$
if $\b<\oo$).  We have as in~\eqref{LJ_eq} that
\begin{equation*}
  \PP^{\ft_1}_{N,r}(\tau_{(x,s)}A)=c(\tau_{(x,s)}^{-1}J)
\mu_{N,r}^{\f,\ft_1}
\big[\exp\big(-\l L_{\tau_{(x,s)}^{-1}J}(\s)\big)\big].
\end{equation*}
From~\eqref{cJ_eq} we see that
 $c(\tau_{(x,s)}^{-1}J)=c(J)$.
By~\cite[Theorem~2.5.1]{bjo_phd},
\[\begin{split}
\lim_{N\to\oo} \mu_{N,r}^{\f,\ft_1}
\big[\exp\big(-\l L_{\tau_{(x,s)}^{-1}J}(\s)\big)\big]
&=\lim_{N\to\oo} \mu_{N,r}^{\f,\ft_1}
\big[\exp\big(-\l L_J(\s)\big)\big]\\
&=\mu^{(\f,\ft_1)}
\big[\exp\big(-\l L_J(\s)\big)\big],
\end{split}\]
and hence
$\lim_{N\to\oo}\PP^{\ft_1}_{N,r}(\tau_{(x,s)}A)=
\lim_{N\to\oo}\PP^{\ft_1}_{N,r}(A)$.
This proves translation-invariance on the
$\pi$-system $\cA_0$, which by the $\pi$-systems
lemma implies full
translation-invariance.

Let $A_1,A_2\in\cA_0$ be the events that
$\psi$ is `even' in $J_1$ and $J_2$, respectively.
For $\|x\|$ and $N$ large enough we have 
from~\eqref{LJ_eq} and~\eqref{cJ_eq} that
\[
\frac{\PP_{N,r}^{\ft_1}(A_1\cap\tau_x A_2)}
{\PP_{N,r}^{\ft_1}(A_1)\PP_{N,r}^{\ft_1}(A_2)}
=\frac{\mu_{N,r}^{\f,\ft_1}
\big[\exp\big(-\l L_{J_1}(\s)\big)
\exp\big(-\l L_{\tau^{-1}_{x}J_2}(\s)\big)\big]}
{\mu_{N,r}^{\f,\ft_1}
\big[\exp\big(-\l L_{J_1}(\s)\big)\big]
\mu_{N,r}^{\f,\ft_1}
\big[\exp\big(-\l L_{\tau^{-1}_{x}J_2}(\s)\big)\big]}.
\]
Letting $N\to\oo$ and writing $\PP^{(\ft_1)}$ for
the weak limit of $\PP_{N,r}^{\ft_1}$ it follows that
\[
\frac{\PP^{(\ft_1)}(A_1\cap\tau_x A_2)}
{\PP^{(\ft_1)}(A_1)\PP^{(\ft_1)}(A_2)}
=\frac{\mu^{(\f,\ft_1)}
\big[\exp\big(-\l L_{J_1}(\s)\big)
\exp\big(-\l L_{J_2}(\tau^{-1}_{x}\s)\big)\big]}
{\mu^{(\f,\ft_1)}
\big[\exp\big(-\l L_{J_1}(\s)\big)\big]
\mu^{(\f,\ft_1)}
\big[\exp\big(-\l L_{J_2}(\s)\big)\big]}.
\]
It follows from Lemma~\ref{mix_lem} 
in the Appendix that the
ratio on the right-hand-side converges to 1
as $\|x\|\to\oo$.  Thus $\PP^{(\ft_1)}$
is mixing on $\cA_0$, hence also mixing on $\cF$ 
and hence ergodic.
\end{proof}

\subsection{Percolation}
\label{perc_sec}

The notions of paths and connectivity, defined 
for $\ol\PP_{N,r}$, extend to $\ol\PP$.
Thus $\KK_\b$ decomposes into a random 
collection of connected components or clusters
(each of these is a union of intervals
bounded by certain elements of $\D$).  Let $U$ denote the 
number of these clusters which are \emph{unbounded}.
The random variable $U$ (which may be infinite)
is invariant under all translations $\tau_x$,
and hence by Proposition~\ref{ergodic_prop} it is
$\ol\PP$-a.s.\ constant.  We will show:

\begin{proposition}\label{perc_prop}
Either $\ol\PP(U=0)=1$ or $\ol\PP(U=1)=1$.
\end{proposition}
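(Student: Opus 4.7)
The plan is to adapt the Burton--Keane argument~\cite{burton-keane} to the continuous, translation-invariant measure $\ol\PP$, with Proposition~\ref{local_prop}(B) playing the role of the usual finite-energy / insertion-tolerance input. By Proposition~\ref{ergodic_prop} the variable $U$ is $\ol\PP$-a.s.\ constant, equal to some $k\in\{0,1,2,\dotsc\}\cup\{\oo\}$, and we must rule out $2\leq k<\oo$ as well as $k=\oo$.

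For the case $2\leq k<\oo$: assume $\ol\PP(U=k)=1$ and choose $N_0,r_0$ large enough that the event $A$, that at least two distinct unbounded clusters meet $K(N_0,r_0)$, has $\ol\PP(A)\geq 1/2$. Since $A$ is determined by the restriction of the configuration to $\KK_\b\sm K(N_0,r_0)$, it is defined in $K(N,r)\sm K(N_0,r_0)$ in the sense of Proposition~\ref{local_prop}(B). On the event $\cC$ of that proposition, all points of $K(N_0,r_0)$ lie in a common cluster, hence on $A\cap\cC$ any two unbounded clusters meeting $K(N_0,r_0)$ are merged, giving $U\leq k-1$. Applying Proposition~\ref{local_prop}(B) at finite volume and passing to the limit via Proposition~\ref{weak_prop} yields $\ol\PP(A\cap\cC)\geq c^{-1}/2>0$, contradicting the a.s.\ value of $U$.

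For the case $k=\oo$: declare a translate $x+K(N_0,r_0)$ to be a \emph{trifurcation box} if the event $\cC$ translated by $x$ occurs and if $\KK_\b\sm(x+K(N_0,r_0))$ contains at least three disjoint unbounded components each connected to $x+K(N_0,r_0)$. Choosing $N_0,r_0$ so that with probability at least $1/2$ the box $K(N_0,r_0)$ meets at least three distinct unbounded clusters, Proposition~\ref{local_prop}(B) again produces some $p>0$ with $\ol\PP(K(N_0,r_0)\text{ is a trifurcation box})\geq p$, and translation invariance promotes this to a positive spatial density of trifurcation boxes. The standard topological step of Burton--Keane then says that disjointly placed trifurcation boxes inside $\L_n\times I_r$ give rise to at least as many distinct `ends' of an associated forest in the geometric boundary of $\L_n\times I_r$. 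This boundary has size of order $n^{d-1}$ when $\b<\oo$ and of order $n^d$ when $\b=\oo$, $r=2n$, contradicting the lower bound of order $n^d$ (resp.\ $n^{d+1}$) for the number of disjoint trifurcation boxes provided by the ergodic theorem applied to the positive density $p$.

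The main obstacle is the second step: one must define trifurcation boxes so that the event is measurable, translation-invariant, and compatible with the `continuum' nature of $\KK_\b$, verify that each such box really yields three disjoint ends of the forest built from the unbounded components and their connections through the chosen boxes, and run the boundary-counting argument carefully in the temporal direction so that $\b<\oo$ and $\b=\oo$ are handled uniformly. The first step, by contrast, is an essentially immediate consequence of the finite-energy bound of Proposition~\ref{local_prop}(B).
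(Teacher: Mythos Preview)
Your plan matches the paper's approach: both adapt Burton--Keane with Proposition~\ref{local_prop}(B) supplying the finite-energy input, and both split into the cases $2\le k<\oo$ and $k=\oo$ exactly as you do. The one substantive point you flag but leave open---what should count as a boundary ``element'' so that the forest $F$ is acyclic and its leaves are finitely many---is precisely the device the paper supplies: the leaves of $F$ are the \emph{maximal $\D$-free intervals} meeting $\partial K(N,r)$, i.e.\ intervals $\{y\}\times I$ with $I\subseteq I_r$ maximal such that $\D_y\cap I=\varnothing$, and with either $y\in\partial\L_N$ or an endpoint of $I$ equal to $\pm r/2$. This random collection is finite, refines the connected components (so the forest really has no cycles), and its expected size is at most $2(2N+1)^d+4\d r(2N+1)^{d-1}$, which is beaten by the expected number of coarse-trifurcations, of order $p\,r(2N+1)^d/[r_0(2N_0+1)^d]$, as $N,r\to\oo$. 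With this in hand your boundary-size heuristics become honest bounds and the contradiction goes through; note also that the paper places coarse-trifurcations on the sublattice $(2N_0+1)\ZZ^d\times 2r_0\ZZ$ (translating in time as well as space), which is what gives the volume-order count you need when $\b=\oo$.
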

\begin{proof}
As this type of argument is fairly standard in 
percolation theory,
and most of the details are the same as for the classical
model~\cite{adcs}, we only sketch the proof and highlight
what adjustments are needed for the quantum case.
We focus on the case $\b=\oo$.  

Let $k$ be such that $\ol\PP(U=k)=1$, we must show that
$k\leq1$.  The possibility that $2\leq k<\oo$
may be ruled out using 
Proposition~\ref{local_prop}.  Roughly speaking,
a large enough box $K(N,r)$ will intersect all
$k$ unbounded components with positive probability.
Using part (B) of Proposition~\ref{local_prop}
one may deduce that, with positive probability,
all these components are in fact connected
inside $K(N,r)$, a contradiction.

Now assume that $k=\oo$.  To get a contradiction
one considers what are called 
\emph{coarse-trifurcations}.  These are
points $(x,t)\in\KK$ with
$x\in (2N_0+1)\ZZ^d$
and $t\in 2r_0\ZZ$, having the properties that
(i) all points in $K_0+(x,t)$
are connected in $K_0+(x,t)$, and 
(ii) $\KK\sm (K_0+(x,t))$
contains at least 3 distinct unbounded connected
components.  Here $N_0$ and $r_0$ are fixed,
and may (using Proposition~\ref{local_prop}(B)
again) be chosen such that each $(x,t)$
is a coarse-trifurcation with probability $p>0$.
Note that $p$ is the same for all $(x,t)$
by translation-invariance.  As in~\cite{adcs}
one may construct a graph $F$ which reflects the
connectivity structure of the coarse-trifurcations
in some large box $K(N,r)$.  Roughly speaking,
the edges of $F$ either connect distinct
coarse-trifurcations, or they connect a coarse-trifurcation 
with an `element' on the boundary of $K(N,r)$.
The latter occurs if the 
boundary can be reached without having to
go too close to another coarse-trifurcation,
otherwise the former occurs.

The main difference to~\cite{adcs} is the correct
notion of `element' on the boundary.  In their
case, where the underlying graphical structure is
discrete, one may take the `elements' simply as vertices
on the boundary.  In the present case we instead 
take it to mean a  maximal $\D$-free interval
intersecting the boundary of $K(N,r)$;  that is an
interval of the form $\{y\}\times I$ with $I\se I_r$ 
maximal such that $\D_y\cap I=\es$,  and such that either  
$y\in\partial\L_N$, or one of the endpoints of $I$
is $\pm r/2$.  With this convention the graph $F$
contains no cycles, as if it did contain a cycle
this would violate the definition of coarse-trifurcation.  
(The important point is that the
collection of maximal $\D$-free
intervals refines the collection of connected components.)
Also, the coarse-trifurcations correspond to vertices
of $F$ of degree at least 3.  This implies that
the number of coarse-trifurcations is at most the
number of leaves of $F$, which is in turn bounded above
by the number of maximal $\D$-free interval
intersecting the boundary of $K(N,r)$.  The expectation 
of the latter is easily seen to be at most
$2(2N+1)^d+4\d r(2N+1)^{d-1}$, whereas the
expected number of coarse-trifurcations is of the
order $p[r(2N+1)^d]/[r_0(2N_0+1)^d]$.  
Letting $N,r\to\oo$ this contradicts $p>0$,
finishing the proof.
\end{proof}

\section{Proof of the main result}
\label{proof_sec}

\subsection{The infrared bound}
We now describe the infrared bound of~\cite{bjo_irb}
and use it to prove a result of key importance
for Theorem~\ref{main_thm}.  For technical
reasons we will in this section redefine the box 
$\L_n$ as $\{-n+1,\dotsc,n\}^d$ so that it has even
sidelength rather than odd.
Recall the Schwinger function~\eqref{schwinger_eq}
and its probabilistic representation~\eqref{corr_eq2}.
Write
\begin{equation*}
c_{N,r}(x,t)=\el\s(0,0)\s(x,t)\er_{N,r}^{\p,\p}.
\end{equation*}
Note that we use periodic boundary conditions
in both directions.
Although $c_{N,r}(x,t)$ is defined for $(x,t)\in K(N,r)$,
we extend the definition
to all of $\KK$ by periodicity.  We write
\begin{equation*}
\L^\star_N=\tfrac{\pi}{N}\L_N,\quad
I_r^\star=\tfrac{2\pi}{r}\ZZ,\quad
K^\star_{N,r}=\L^\star_N\times I_r^\star.
\end{equation*}
Elements of $K^\star_{N,r}$ will be denoted $\xi=(k,\ell)$
where $k\in \L_N^\star$ and $\ell\in I_r^\star$.
For large $N,r$ we may see $\L_N^\star$ 
as an approximation of $(-\pi,\pi]^d$
and $I_r^\star$ as an 
approximation of $\RR$.  
For $p=(p_1,\dotsc,p_d)\in(-\pi,\pi]^d$ let 
$\hat L(p)=\sum_{j=1}^d(1-\cos(p_j))$
denote the Fourier transform of the graph Laplacian of $\ZZ^d$,
and define 
\begin{equation*}
E_{\l,\d}(p,q)=\frac{2\l\hat L(p)+q^2/2\d}{48},
\qquad p\in(-\pi,\pi]^d, q\in\RR.
\end{equation*}
The Fourier transform of $c_{N,r}$ is 
\begin{equation*}
\hat c_{N,r}(\xi)=\sum_{x\in \L_N}\int_{I_r} c_{N,r}(x,t)
e^{ik\cdot x} e^{i\ell t}\,dt,\quad
\xi=(k,\ell)\in  K_{N,r}^\star,
\end{equation*}
where $k\cdot x$ denotes the usual scalar product.
Note that $\hat c_{N,r}(\xi)\geq0$.
The infrared bound of~\cite{bjo_irb} states that
\begin{equation}\label{irb}
\mbox{if } \xi\in K_{N,r}^\star\sm\{0\}
\mbox{ then } \hat c_{N,r}(\xi)\leq \frac{1}{E_{\l,\d}(\xi)}.
\end{equation}
We will use this to show the following:
\begin{lemma}\label{avg_lem}
Suppose $\b<\oo$ and $d\geq 3$.  Then 
\begin{equation}\label{avg_pos}
\lim_{n\to\oo}\frac{1}{|\L_n|}\sum_{x\in \L_n}\int_{I_\b}
\el\s(0,0)\s(x,t)\er_{\l_\crit,\b}^{(\f,\p)}\,dt=0.
\end{equation}
Suppose $\b=\oo$ and $d\geq 2$.  Then 
\begin{equation}\label{avg_ground}
\lim_{n,r\to\oo}\frac{1}{|\L_n|r}\sum_{x\in \L_n}\int_{I_r}
\el\s(0,0)\s(x,t)\er_{\l_\crit,\oo}^{(\f,\f)}\,dt=0.
\end{equation}
\end{lemma}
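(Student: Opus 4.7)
The plan is to reduce both claims to the analogous statement for the fully periodic state, write the normalised average as a pairing of a normalised Dirichlet kernel against the Fourier transform of the two--point function, and conclude by dominated convergence using the infrared bound~\eqref{irb}.

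As a first step, I would use the monotonicity of boundary conditions~\eqref{corr_comp} to bound
\[
\el \s(0,0) \s(x,t) \er^{(\f,\p)}_{\l_\crit,\b} \le c^{(\p,\p)}(x,t),
\]
and similarly $\el \s(0,0) \s(x,t) \er^{(\f,\f)}_{\l_\crit,\oo} \le c^{(\p,\p)}(x,t)$ in the ground-state case, where $c^{(\p,\p)}$ denotes any subsequential infinite-volume limit of the doubly periodic Schwinger functions $c^{\p,\p}_{N,r}$ defined just before~\eqref{irb}. Such a limit is translation invariant in both space and time, and the finite-volume infrared bound~\eqref{irb} passes to the limit to yield $\hat c^{(\p,\p)}(k,\ell) \le 1/E_{\l_\crit,\d}(k,\ell)$ for $(k,\ell) \neq 0$.

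For the positive-temperature case, I set $h(x) = \int_{I_\b} c^{(\p,\p)}(x,t)\,dt$, so that the spatial Fourier transform satisfies $\hat h(k) = \hat c^{(\p,\p)}(k,0)$, and $g_n(x) = \one\{x\in\L_n\}/|\L_n|$ with $\hat g_n(k) = |\L_n|^{-1}\sum_{x\in\L_n} e^{ikx}$. The infrared bound gives $0\le\hat h(k) \le 24/(\l_\crit \hat L(k))$, which is in $L^1((-\pi,\pi]^d)$ precisely because $d \ge 3$ and $\hat L(k) \sim |k|^2/2$ near the origin. This integrability justifies recovering $h$ via Fourier inversion, and Fubini (valid since $g_n$ has finite support) yields
\[
\frac{1}{|\L_n|}\sum_{x\in\L_n} h(x) = \frac{1}{(2\pi)^d}\int_{(-\pi,\pi]^d} \overline{\hat g_n(k)}\,\hat h(k)\,dk.
\]
Since $|\hat g_n| \le 1$ and $\hat g_n(k) \to 0$ for every $k \neq 0$, dominated convergence with dominating function $\hat h$ gives the desired limit $0$. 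The ground-state case is entirely analogous, now with $g_{n,r}(x,t) = \one\{x\in\L_n, t\in I_r\}/(|\L_n|r)$; its Fourier transform factorises, each factor tends pointwise to $0$ off the origin as $n,r\to\oo$, and the dominating integrand $1/E_{\l_\crit,\d}(k,q) = 48/(2\l_\crit \hat L(k) + q^2/(2\d))$ lies in $L^1((-\pi,\pi]^d \times \RR)$ for $d \ge 2$: performing the $q$-integral first produces a factor proportional to $1/\sqrt{\hat L(k)} \sim 1/|k|$, which is integrable on $(-\pi,\pi]^d$ iff $d \ge 2$.

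The main obstacle lies in the infinite-volume bookkeeping: one must ensure that a suitable weak limit of $\mu_{N,r}^{\p,\p}$ exists (passing to a subsequence if necessary), that the finite-volume infrared bound~\eqref{irb} transfers to this limit (pointwise or at least a.e.\ in the continuous Fourier variable), and that the Fourier inversion for $h$ is legitimate even though $h$ need not be summable at criticality --- this is exactly what the $L^1$ bound on $\hat h$ delivers. The dimensional thresholds $d \ge 3$ (resp.\ $d \ge 2$) in the statement are precisely those for which the dominating function is Lebesgue integrable near the origin, so no slack is available and the argument is sharp in this respect.
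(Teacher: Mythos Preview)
Your argument has a genuine gap at the step where you pass the infrared bound to an infinite-volume periodic state and then invoke Fourier inversion. The bound~\eqref{irb} controls $\hat c_{N,r}(\xi)$ only for $\xi\neq 0$; the zero mode $\hat c_{N,r}(0)$ is the finite-volume susceptibility $\chi_{N,r}^{\p,\p}$, which diverges as $N,r\to\infty$ whenever $\l\ge\l_\crit$. In any limiting Fourier picture (e.g.\ viewing $\hat c^{(\p,\p)}$ as a positive measure via Bochner), this divergence shows up as a possible atom at the origin, of mass $\lim_{\|x\|\to\infty}c^{(\p,\p)}(x,t)=(M^+)^2$. Your identity $\tfrac{1}{|\L_n|}\sum_x h(x)=(2\pi)^{-d}\int \overline{\hat g_n(k)}\,\hat h(k)\,dk$ then acquires an extra term $(M^+)^2\,\hat g_n(0)=(M^+)^2$, which does not vanish as $n\to\infty$. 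Since $M^+(\l_\crit)=0$ is essentially what the paper is proving, assuming the atom is absent is circular. The related claim that ``the $L^1$ bound on $\hat h$ delivers'' Fourier inversion is likewise problematic: you cannot define $\hat h(k)=\sum_x h(x)e^{ikx}$ as a function unless $h\in\ell^1$, which fails at criticality, and the distributional Fourier transform carries exactly the singular part you would need to exclude.

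The paper's proof sidesteps this by never forming an infinite-volume Fourier transform. It stays in finite volume, applies~\eqref{irb} together with Parseval to the test function $v=\one_{K(N_0,r_0)}$, and isolates the zero mode as the explicit term $\tfrac{(2N_0)^dr_0}{(2N)^dr}\,\chi_{N,r}^{\p,\p}$ in~\eqref{chi_bound}. The key trick is to take $N,r\to\infty$ first with $\l<\l_\crit$, where $\chi_{N,r}^{\p,\p}$ converges to a finite limit so that the prefactor kills the term, giving a bound~\eqref{G_bound} free of the susceptibility. Only then does one send $\l\uparrow\l_\crit$, using left-continuity of the free-boundary correlations, and finally $N_0,r_0\to\infty$. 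The order of limits---volume, then coupling, then averaging window---is precisely what neutralises the zero mode that your approach leaves uncontrolled.
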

\begin{proof}
We begin by showing that the stated conditions
on $d$ imply the following:
\begin{equation}\label{L1_pos}
\mbox{if $\b<\oo$ then }
\int_{(-\pi,\pi]^d}dp\sum_{\ell\in I_\b^\star} \frac{1}{E_{\l,\d}(p,\ell)}<\oo,
\end{equation}
and
\begin{equation}\label{L1_ground}
\mbox{if $\b=\oo$ then }
\int_{(-\pi,\pi]^d}dp\int_\RR dq \frac{1}{E_{\l,\d}(p,q)}<\oo.
\end{equation}
Firstly, in the case $\b<\oo$ we have that
\begin{equation*}\begin{split}
\int_{(-\pi,\pi]^d}dp\sum_{\ell\in I_\b^\star}
\frac{1}{E_{\l,\d}(p,\ell)}
&=\int_{(-\pi,\pi]^d}dp\sum_{\ell\in I_\b^\star}
\frac{48}{2\l\hat L(k)+\ell^2/2\d}\\
&=\frac{96\d}{(2\pi/\b)^2}\int_{(-\pi,\pi]^d}dp
\sum_{m\in\ZZ}\frac{1}{a(p)+m^2}\\
&\leq\frac{96\d}{(2\pi/\b)^2}\int_{(-\pi,\pi]^d}
\Big(\frac{1}{a(p)}+\frac{\pi}{a(p)^{1/2}}\Big)dp,
\end{split}
\end{equation*}
where $a(p)=\l\d\b^2\hat L(p)/\pi^2$.  Note that 
$1/\hat L(p)$ diverges for $p\to 0$, in the manner
of $1/\|p\|_2^2$, and that for $\a>0$
\begin{equation}\label{L-int-alpha}
\int_{(-\pi,\pi]^d}\frac{1}{\hat L(p)^\a}dp<\oo
\mbox{ if and only if } d>2\a.
\end{equation}
Thus~\eqref{L1_pos} holds for $d>2$, i.e.\ for $d\geq3$
as claimed.  In the case $\b=\oo$ we have
\begin{equation*}\begin{split}
\int_{(-\pi,\pi]^d}dp\int_{-\oo}^\oo dq \frac{1}{E_{\l,\d}(p,q)}
&=96\d \int_{(-\pi,\pi]^d}dp\int_{-\oo}^\oo dq 
\frac{1}{4\l\d\hat L(p)+q^2}\\
&=48\pi\sqrt{\d/\l}
\int_{(-\pi,\pi]^d}\frac{1}{\hat L(p)^{1/2}}dp.
\end{split}\end{equation*}
Thus by~\eqref{L-int-alpha} we have~\eqref{L1_ground}
for $d>1$, i.e.\ for $d\geq 2$ as claimed.

Now define the function
\begin{equation*}
G_{N,r}((x,s),(y,t))=\frac{1}{(2N)^dr}
\sum_{(k,\ell)\in K_N^\star\sm\{0\}}
\frac{e^{-ik\cdot (x-y)}e^{-i\ell(s-t)}}{E_{\l,\d}(k,\ell)},
\end{equation*}
where $x,y\in\ZZ^d$ and $s,t\in\RR$.  In the case when
$\b<\oo$ then by Riemann approximation
\begin{equation*}
\begin{split}
G_{N,\b}((x,s),(y,t))&\to\frac{1}{(2\pi)^d\b}
\int_{(-\pi,\pi]^d}dp\sum_{\ell\in I_\b^\star} 
\frac{e^{-ip\cdot (x-y)}e^{-i\ell(s-t)}}{E_{\l,\d}(p,\ell)}\\
&=:G_{\b}((x,s),(y,t)),\mbox{ as }N\to\oo.
\end{split}
\end{equation*}
From~\eqref{L1_pos}  we deduce that
\[
\frac{1}{(2n)^{d}}\sum_{x\in\L_n} \int_{I_\b}
dt \,G_\b((0,0),(x,t))\to 0\mbox{ as }n\to\oo,
\]
which in turn implies that 
\begin{equation}\label{RL_pos}
\frac{1}{(2n)^{2d}}\sum_{x,y\in\L_n} \iint_{I_\b\times I_\b}
dsdt \,G_\b((x,s),(y,t))\to 0\mbox{ as }n\to\oo.
\end{equation}
Similarly, for $\b=\oo$ we have that
\begin{equation*}\begin{split}
G_{N,r}((x,s),(y,t))&\to\frac{1}{(2\pi)^{d+1}}
\int_{(-\pi,\pi]^d}dp\int_\RR dq
\frac{e^{-ip\cdot (x-y)}e^{-iq(s-t)}}{E_{\l,\d}(p,q)}\\
&=:G_{\oo}((x,s),(y,t)),\mbox{ as }N,r\to\oo,
\end{split}
\end{equation*}
and hence using~\eqref{L1_ground} that 
\begin{equation}\label{RL_ground}
\frac{1}{(2n)^{2d}r^2}\sum_{x,y\in\L_n} \iint_{I_r\times I_r}
dsdt \,G_\oo((x,s),(y,t))\to 0\mbox{ as }n,r\to\oo.
\end{equation}

We now show how~\eqref{RL_pos} and~\eqref{RL_ground}
imply~\eqref{avg_pos} and~\eqref{avg_ground}, respectively.
Note that by Fourier inversion
\begin{equation*}
c_{N,r}(x,t)=\frac{1}{(2N)^dr}\sum_{k\in\L_N^\star}
\sum_{\ell\in I_r^\star} \hat c_{N,r}(k,\ell) e^{-ik\cdot x}e^{-i\ell t}.
\end{equation*}
Let $v:K(N,r)\to\CC$ be an aribtrary bounded, measurable function.  
It follows that
\begin{equation}\label{v_eq}
\begin{split}
\sum_{x,y\in\L_N}&\iint_{I_r\times I_r}dsdt \,v(x,s)\overline{v(y,t)}
c_{N,r}(x-y,s-t)\\
&=\frac{1}{(2N)^dr}\sum_{\xi\in K_{N,r}^\star}
\hat c_{N,r}(\xi) |z_v(\xi)|^2,
\end{split}
\end{equation}
where 
\begin{equation*}
z_v(k,\ell)=\sum_{x\in\L_N}\int_{I_r}v(x,s) e^{-ik\cdot x}e^{-i\ell s}ds.
\end{equation*}
Using the infrared bound~\eqref{irb},
\begin{equation*}
\sum_{\xi\in K_{N,r}^\star}
\hat c_{N,r}(\xi) |z_v(\xi)|^2
\leq\sum_{\xi\in K_{N,r}^\star\sm\{0\}}
\frac{1}{E_{\l,\d}(\xi)} |z_v(\xi)|^2+
\hat c_{N,r}(0)  |z_v(0)|^2.
\end{equation*}
Note that
\begin{equation*}
\hat c_{N,r}(0)  =\sum_{x\in\L_N}\int_{I_r} c_{N,r}(x,t)dt=:
\chi^{\p,\p}_{N,r}
\end{equation*}
equals the (finite-volume)  susceptibility.
Interchanging the order of summation again thus gives
\begin{multline}\label{v_rhs_eq}
\frac{1}{(2N)^dr}\sum_{\xi\in K_{N,r}^\star}
\hat c_{N,r}(\xi) |z_v(\xi)|^2\\\leq
\sum_{x,y\in\L_N}\iint_{I_r\times I_r} v(x,s) \overline{v(y,t)}
G_{N,r}((x,s),(y,t))+\frac{|z_v(0)|^2}{(2N)^dr} \chi^{\p,\p}_{N,r}.
\end{multline}

Let $N_0<N$, and as usual let $r_0<r$ 
 if $\b=\oo$, alternatively $r_0=r=\b$
if $\b<\oo$.  Set
\[
v(x,s)=\one\{x\in\L_{N_0},s\in I_{r_0}\}.
\]
In what follows we use the same notation 
$\el\cdot\er_{N,r}^{\f}$ for both
$\el\cdot\er_{N,r}^{\f,\f}$ (in the case
$\b=\oo$) and $\el\cdot\er_{N,\b}^{\f,\p}$ (in the case
$\b<\oo$).  We also write
$\el\cdot\er_{\l,\b}^{(\f)}$ for both
infinite-volume limits $\el\cdot\er_{\l,\oo}^{(\f,\f)}$
and $\el\cdot\er_{\l,\b}^{(\f,\p)}$.  By the monotonicity~\eqref{corr_comp} 
of  correlation functions we have that
\begin{equation*}
c_{N,r}(x-y,s-t)=\el\s(x,s)\s(y,t)\er_{N,r}^{\p,\p}
\geq \el\s(x,s)\s(y,t)\er_{N,r}^{\f}.
\end{equation*}
Thus for our choice of $v$ we have that
the left-hand-side of~\eqref{v_eq} is  at
least 
\begin{equation*}
\sum_{x,y\in\L_{N_0}}\iint_{I_{r_0}\times I_{r_0}}
\el\s(x,s)\s(y,t)\er_{N,r}^{\f} dsdt.
\end{equation*}
By~\eqref{v_rhs_eq} it follows that 
\begin{equation}\label{chi_bound}
\begin{split}
\sum_{x,y\in\L_{N_0}}&\iint_{I_{r_0}\times I_{r_0}}
\el\s(x,s)\s(y,t)\er_{N,r}^{\f} dsdt\\
&\leq \sum_{x,y\in\L_{N_0}}\iint_{I_{r_0}\times I_{r_0}} dsdt
G_{N,r}((x,s),(y,t))+\frac{(2N_0)^dr_0}{(2N)^dr} \chi^{\p,\p}_{N,r}.
\end{split}
\end{equation}

Now let $\l<\l_\crit$.  This 
implies that $\el\cdot\er_{\l,\b}^{(\f)}$ is the unique
infinite-volume limit of the
measures $\el\cdot\er_{N,r}^{\fs,\ft}$, and
by finiteness of the susceptibility~\cite[Theorem~6.6]{bjogr} 
and the dominated convergence theorem we have that 
\begin{equation*}
\chi^{\p,\p}_{N,r}\to 
\sum_{x\in\ZZ^d}\int_{I_\b}\el\s(0,0)\s(x,t)\er_{\l,\b}^{(\f)}dt<\oo,
\end{equation*}
as $N\to\oo$ (for $\b=r<\oo$) or 
$N,r=2N\to\oo$ (for $\b=\oo$).
Hence, letting $N\to\oo$ or $N,r\to\oo$ as appropriate, we obtain
from~\eqref{chi_bound}  that 
for all $\l<\l_\crit$ we have
\begin{equation}\label{G_bound}
\begin{split}
\sum_{x,y\in\L_{N_0}}&\iint_{I_{r_0}\times I_{r_0}}
\el\s(x,s)\s(y,t)\er_{\l,\b}^{(\f)} dsdt\\
&\leq \sum_{x,y\in\L_{N_0}}\iint_{I_{r_0}\times I_{r_0}} 
G_\b((x,s),(y,t))\,dsdt.
\end{split}
\end{equation}
Letting $\l\uparrow\l_\crit$ and using the fact that 
$\el\s(x,s)\s(y,t)\er_{\l,\b}^{(\f)}$ is left-continuous in $\l$
(since any two increasing limits can be interchanged),
we get that~\eqref{G_bound} holds also with $\l=\l_\crit$.
Letting 
$N_0\to\oo$ or $N_0,r_0\to\oo$ as
appropriate we get from~\eqref{RL_pos} and~\eqref{RL_ground} 
that
\[
\frac{1}{(2N_0)^{2d}r_0^2}
\sum_{x,y\in\L_{N_0}}\iint_{I_{r_0}\times I_{r_0}}
\el\s(x,s)\s(y,t)\er_{\l_\crit,\b}^{(\f)} \,dsdt\to0.
\]
Using translation-invariance and nonnegativity
of $\el\s(x,s)\s(y,t)\er_{\l_\crit,\b}^{(\f)}$, the 
results~\eqref{avg_pos} and~\eqref{avg_ground} follow.
\end{proof}

\subsection{Proof of Theorem~\ref{main_thm}}

From this point the argument is almost identical
to that for the classical model~\cite{adcs},
however it is also short and elegant so we include 
the remaining steps.
We begin by deducing from Lemma~\ref{avg_lem} 
the following consequence for the number
$U$ of unbounded components under the measure $\ol\PP$.

\begin{proposition}\label{crit-perc_prop}
Under the conditions in Lemma~\ref{avg_lem}
and for $\l=\l_\crit$
we have that $\ol\PP(U=0)=1$.
\end{proposition}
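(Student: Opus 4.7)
The plan is to argue by contradiction from the dichotomy of Proposition~\ref{perc_prop}. Assume $\overline{\PP}(U=1)=1$ and set $\theta:=\overline{\PP}((0,0)\lra\infty)$. Spatial ergodicity (Proposition~\ref{ergodic_prop}), applied to the indicator that the origin is connected to infinity, forces $\theta>0$: the a.s.\ existing unbounded cluster must occupy positive density. Extending the mixing property underlying the proof of Proposition~\ref{ergodic_prop} to suitable approximations of connection events, one obtains
\[
\overline{\PP}\big((0,0)\lra\infty,\,(x,t)\lra\infty\big)\to\theta^2
\quad\text{as}\ \|(x,t)\|\to\infty.
\]
By uniqueness of the infinite cluster (which follows from $U=1$), this entails $\overline{\PP}((0,0)\lra(x,t))\geq\theta^2-o(1)$ at large distance.

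The heart of the argument is to convert this connection lower bound into a lower bound on a two-point function. Mimicking the strategy of~\cite{adcs} for the classical model, I would apply Lemma~\ref{sw_lem} in turn to each of the two labellings $\psi^{\ft_1}$ and $\hat\psi^{\ft_2}$ to derive a ``double switching'' inequality of the schematic form
\[
\langle\s_{(0,0)}\s_{(x,t)}\rangle^{(\f)}\cdot\langle\s_{(0,0)}\s_{(x,t)}\rangle^{(\w)}\geq\overline{\PP}\big((0,0)\lra(x,t)\text{ off }\G\big),
\]
the quantum analogue of the classical identity $\langle\s_0\s_x\rangle^2\geq\overline{\PP}(0\lra x)$. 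Combined with the trivial bound $\langle\cdot\rangle^{(\w)}\leq 1$, this would yield
\[
\langle\s_{(0,0)}\s_{(x,t)}\rangle^{(\f)}_{\l_\crit,\b}\geq\overline{\PP}\big((0,0)\lra(x,t)\text{ off }\G\big).
\]
A further ingredient is needed to remove the ``off~$\G$'' restriction: using Proposition~\ref{local_prop}(A) together with translation-invariance of the infinite-volume measure, one should be able to control the residual probability $\overline{\PP}((0,0)\lra\G)\,\overline{\PP}((x,t)\lra\G)$ after averaging, exploiting that the ghost~$\G$ has been pushed to spatial or temporal infinity in the weak limit.

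Averaging the resulting lower bound over $x\in\L_n$ and $t\in I_\b$ (respectively $t\in I_r$ with $r=2N$ in the ground-state case) and passing to the limit would then yield
\[
\liminf_{n\to\infty}\frac{1}{|\L_n|}\sum_{x\in\L_n}\int_{I_\b}
\langle\s_{(0,0)}\s_{(x,t)}\rangle^{(\f)}_{\l_\crit,\b}\,dt\geq c\,\theta^2>0,
\]
which directly contradicts Lemma~\ref{avg_lem}. The main technical obstacle will be establishing the double switching inequality in the quantum setting, and in particular keeping track of the ``off $\G$'' condition under the asymmetric boundary conditions $\ft_1=\f,\ft_2=\w$ used in the case $\b=\oo$; the case $\b<\oo$ (with the symmetric choice $\ft_1=\ft_2=\p$) should be noticeably cleaner.
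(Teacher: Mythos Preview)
Your strategy has the right ingredients but is considerably more circuitous than the paper's, and two of the steps you flag as ``technical obstacles'' are in fact unnecessary detours.

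The paper's argument is direct, not by contradiction. The switching lemma (applied to both copies simultaneously, not just in the form of Lemma~\ref{sw_lem}) gives at finite volume the \emph{identity}
\[
\ol\PP_{N,r}\big((x,s)\lra(y,t)\big)
=\langle\s_{x,s}\s_{y,t}\rangle_{N,r}^{\f,\ft_1}\,
\langle\s_{x,s}\s_{y,t}\rangle_{N,r}^{\w,\ft_2}
\le \langle\s_{x,s}\s_{y,t}\rangle_{N,r}^{\f,\ft_1},
\]
for the \emph{full} connection event, with no ``off~$\G$'' restriction. Your proposed ``double switching inequality'' is thus really an equality, and by reading it the other way round you get the bound $\ol\PP(0\lra\k)\le\langle\s_0\s_\k\rangle^{(\f)}$ immediately, with nothing further to remove. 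All your discussion of controlling the ghost via Proposition~\ref{local_prop}(A) is superfluous here.

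The second simplification is that mixing is not needed at all. Instead of your chain ``$\theta>0$ $\Rightarrow$ mixing $\Rightarrow$ $\ol\PP(0\lra(x,t))\ge\theta^2-o(1)$'', the paper uses a one-line second-moment (Jensen) argument: by translation invariance and uniqueness of the infinite cluster,
\[
\big(r|\L_N|\,\theta\big)^2
\le \ol\PP\Big(\Big(\sum_{x}\int\one\{(x,s)\lra\infty\}\,ds\Big)^2\Big)
\le r|\L_N|\sum_x\int \ol\PP\big(0\lra(x,s)\big)\,ds,
\]
so $\theta^2$ is bounded by exactly the average of $\langle\s_0\s_{x,s}\rangle^{(\f)}$ appearing in Lemma~\ref{avg_lem}, and one concludes $\theta=0$. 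This bypasses your mixing step entirely; note in particular that for $\b=\infty$ your route would require mixing under temporal shifts, which Proposition~\ref{ergodic_prop} does not provide (it only asserts ergodicity under spatial shifts $\tau_x$), whereas the Jensen argument needs only translation \emph{invariance} in both directions.
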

\begin{proof} 
Recall our convention on boundary conditions for the
measure $\ol\PP_{N,r}$:  if $\b<\oo$ we write $\ft_1=\ft_2=\p$,
if $\b=\oo$ we write $\ft_1=\f$ and $\ft_2=\w$.  By the
definition of $\ol\PP_{N,r}$ and the Switching 
Lemma~\ref{sw_lem} we have for any $N,r$
and $(x,s),(y,t)\in K(N,r)$ that
\[\begin{split}
\ol\PP_{N,r}((x,s)\lra(y,t))&=
\el\s(x,s)\s(y,t)\er_{N,r}^{\f,\ft_1}
\el\s(x,s)\s(y,t)\er_{N,r}^{\w,\ft_2}\\
&\leq\el\s(x,s)\s(y,t)\er_{N,r}^{\f,\ft_1}.
\end{split}\]
By the convergence of the correlation function
and using Proposition~\ref{weak_prop}
(and a small, but standard, additional argument)
we get that
\begin{equation}\label{p-c}
\ol\PP((x,s)\lra(y,t))\leq 
\el\s(x,s)\s(y,t)\er_{\l_\crit,\b}^{(\f,\ft_1)}.
\end{equation}
Writing $\{(x,s)\lra\oo\}$ for the event that
$(x,s)$ lies in an unbounded component we have
using Jensen's inequality and the fact that
$U\leq1$ (Proposition~\ref{perc_prop}) that
\[\begin{split}
\Big( r|\L_N| \ol\PP((0,0)\lra\oo)\Big)^2
&\leq \ol\PP\Big( \Big(\sum_{x\in\L_N} 
\int_{I_r} \one\{(x,s)\lra\oo\}ds\Big)^2\Big)\\
&= \sum_{x,y\in\L_N} 
\iint_{I_r\times I_r} \ol\PP((x,s),(y,t)\lra\oo)dsdt\\
&\leq \sum_{x,y\in\L_N} 
\iint_{I_r\times I_r} \ol\PP((x,s)\lra(y,t))dsdt\\
&= r|\L_N|\sum_{x\in\L_N} 
\int_{I_r} \ol\PP((0,0)\lra(x,s))ds.
\end{split}\]
Using~\eqref{p-c} we deduce that
\[
\ol\PP((0,0)\lra\oo)^2\leq
\frac{1}{r|\L_N|}\sum_{x\in\L_N} 
\int_{I_r} \el\s(0,0)\s(x,s)\er_{\l_\crit,\b}^{(\f,\ft_1)}
ds.
\]
Letting $N\to\oo$ or $N,r\to\oo$ as appropriate,
and using Lemma~\ref{avg_lem}, the result follows.
\end{proof}

Turning to the final steps in the 
proof of Theorem~\ref{main_thm}, we recall from
Proposition~\ref{local_prop} that for each 
$(x,t)\in\KK_\b$ there is a constant $C_{(x,t)}$
such that for all $N,r$ we have
\[
\el\s(0,0)\s(x,t)\er_{N,r}^{\w,\ft_2}
-\el\s(0,0)\s(x,t)\er_{N,r}^{\f,\ft_1}\leq
C_{(x,t)} \ol\PP_{N,r}((0,0)\lra\G).
\]
Also note that for all $N_0\leq N$ and $r_0\leq r$
we have that 
$\ol\PP_{N,r}((0,0)\lra\G)\leq\ol\PP_{N,r}((0,0)\lra\partial K(N_0,r_0))$,
since any path to $\G$ must leave $K(N_0,r_0)$.
Letting $N\to\oo$ (respectively, $N,r\to\oo$)
and then $N_0\to\oo$ (respectively, $N_0,r_0\to\oo$)
it follows from Proposition~\ref{crit-perc_prop} that 
\[
\el\s(0,0)\s(x,t)\er_{\l_\crit,\b}^{(\w,\ft_2)}
-\el\s(0,0)\s(x,t)\er_{\l_\crit,\b}^{(\f,\ft_1)}\leq
C_{(x,t)} \ol\PP((0,0)\lra\oo)=0.
\]
Thus $\el\s(0,0)\s(x,t)\er_{\l_\crit,\b}^{(\w,\ft_2)}
=\el\s(0,0)\s(x,t)\er_{\l_\crit,\b}^{(\f,\ft_1)}$.
By translation-invariance and 
the Griffiths inequality
(proved in detail for the present model 
in~\cite[Lemma~2.2.20]{bjo_phd}) it follows that
\[\begin{split}
\big(\el\s(0,0)\er_{\l_\crit,\b}^{(\w,\ft_2)}\big)^2&=
\el\s(0,0)\er_{\l_\crit,\b}^{(\w,\ft_2)}
\el\s(x,t)\er_{\l_\crit,\b}^{(\w,\ft_2)}\leq
\el\s(0,0)\s(x,t)\er_{\l_\crit,\b}^{(\w,\ft_2)}\\
&=\el\s(0,0)\s(x,t)\er_{\l_\crit,\b}^{(\f,\ft_1)}.
\end{split}\]
Thus using Lemma~\ref{avg_lem} again,
if $\b<\oo$ and $d\geq3$ or 
$\b=\oo$ and $d\geq2$ then 
\[
\big(\el\s(0,0)\er_{\l_\crit,\b}^{(\w,\ft_2)}\big)^2
\leq \frac{1}{r|\L_N|}\sum_{x\in\L_N}\int_{I_r}
\el\s(0,0)\s(x,t)\er_{\l_\crit,\b}^{(\f,\ft_2)}dt\to 0,
\]
hence by~\eqref{spont_resid} we have 
$M^+_\b(\l_\crit)=\el\s(0,0)\er_{\l_\crit,\b}^{(\w,\ft_2)}=0$
as claimed.\qed

\section{Appendix:  mixing in the space--time spin representation}

In this section we prove mixing results for
the infinite-volume space--time spin measures
$\mu^{(\fs,\ft)}_\b$ defined in Section~\ref{spin_sec}.
As usual we let $\ft=\p$ if $\b<\oo$ and
$\ft\in\{\f,\w\}$ if $\b=\oo$. 
As a shorthand we write
\[
\mu^{(\w)}=\left\{\begin{array}{ll}
\mu^{(\w,\w)}_\oo & \mbox{if } \b=\oo,\\
\mu^{(\w,\p)}_\b & \mbox{if } \b<\oo,
\end{array}\right.\qquad
\mu^{(\f)}=\left\{\begin{array}{ll}
\mu^{(\f,\f)}_\oo & \mbox{if } \b=\oo,\\
\mu^{(\f,\p)}_\b & \mbox{if } \b<\oo,
\end{array}\right.
\]
and $\el\cdot\er^{(\w)}$, $\el\cdot\er^{(\f)}$ for
the corresponding expectation operators.
For simplicity of presentation we focus on
the case $\b=\oo$, similar results and
constructions hold for the case $\b<\oo$.

To state and prove our mixing results we need 
to be precise about the topological set-up.
We define a metric $d$ on $\S$ as follows.
Firstly, for each $n\geq1$ define a `local' metric
\[
d_n(\s,\s')=\sum_{x\in\L_n}\int_{I_n} |\s(x,t)-\s'(x,t)|dt,
\qquad \s,\s'\in\S,
\]
and then extend this in a standard way by letting
\[
d(\s,\s') =\sum_{n\geq0}2^{-n}
\frac{d_n(\s,\s')}{1+d_n(\s,\s')}.
\]
Recall that a function $F:\S\to\RR$ is 
\begin{list}{$\bullet$}{\leftmargin=1em}
\item \emph{uniformly continuous} if for each $\eps>0$ there
is $\d>0$ such that if $d(\s,\s')<\d$ then
$|F(\s)-F(\s')|<\eps$;
\item \emph{even} if $F(\s)=F(-\s)$ for all $\s\in\S$.
\end{list}
We will  prove the following:
\begin{lemma}\label{mix_lem}
Let $0<\b\leq\oo$ and let $C_1,C_2:\S\to\RR$ be bounded,
uniformly continuous functions.  Then
\[
\lim_{\|x\|\to\oo}\el C_1(\s) [C_2\circ\tau_x](\s)]\er^{(\w)}
=\el C_1(\s)\er^{(\w)}\el C_2(\s)\er^{(\w)}.
\]
If, in addition, $C_1,C_2$ are even then also
\[
\lim_{\|x\|\to\oo}\el C_1(\s) [C_2\circ\tau_x](\s)]\er^{(\f)}
=\el C_1(\s)\er^{(\f)}\el C_2(\s)\er^{(\f)}.
\]
\end{lemma}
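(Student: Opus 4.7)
The plan is to proceed in three steps: reduce to local cylinder functions, invoke extremality of the wired limit measure, and use spin-flip symmetry to handle the free case.

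First, given bounded uniformly continuous $C_1,C_2$ and $\eps>0$, I would approximate each $C_i$ by a function depending only on the restriction of $\s$ to a large box of the form $K(n,r)$ (with $r=n$ if $\b=\oo$ and $r=\b$ otherwise). Since the metric $d$ dominates each term $2^{-n}d_n/(1+d_n)$, uniform continuity forces a modulus of continuity controlled by finitely many of the $d_n$. Concretely, fixing a reference configuration $\s_0\in\S$ and defining $C_i^{(n)}(\s) := C_i\bigl(\s\one_{K(n,r)}+\s_0\one_{K(n,r)^c}\bigr)$ yields $\|C_i-C_i^{(n)}\|_\infty\to 0$ as $n\to\oo$. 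Once $\|x\|$ is large compared to $n$, the local supports of $C_1^{(n)}$ and $C_2^{(n)}\circ\tau_x$ are disjoint, so it suffices to prove mixing for local functions on well-separated regions.

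For the wired case I would combine two ingredients: the \fkg\ inequality for the space--time spin representation (cf.\ Theorem~2.2.12 of~\cite{bjo_phd}) and the monotone decrease in $N$ of the finite-volume measures $\mu^{\w,\w}_{N,r}$ (resp.\ $\mu^{\w,\p}_{N,\b}$). As for the classical Ising model, these imply that the weak limit $\mu^{(\w)}$ is extremal among infinite-volume Gibbs states, equivalently that its tail $\s$-algebra is trivial. A translation-invariant Gibbs state with trivial tail algebra is automatically mixing with respect to translations by a standard martingale-convergence argument; applying this to $C_1^{(n)}$ and $C_2^{(n)}\circ\tau_x$ and then sending $n\to\oo$ with the $\eps$-approximation yields the wired statement. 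For the free case with even $C_1,C_2$, I would exploit the global spin-flip symmetry of $\mu^{(\f)}$: if $M^+=0$ then uniqueness of the Gibbs state gives $\mu^{(\f)}=\mu^{(\w)}$ and the wired result applies directly; otherwise one may decompose $\mu^{(\f)}=\tfrac{1}{2}(\mu^{(\w)}+\mu^{(-)})$ with $\mu^{(-)}$ the spin-flip of $\mu^{(\w)}$, and evenness of $C_1\cdot C_2\circ\tau_x$ makes its expectation agree under $\mu^{(\w)}$ and $\mu^{(-)}$, so mixing transfers.

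The main obstacle is the rigorous justification of tail-triviality (extremality) of $\mu^{(\w)}$ in the continuous space-time setting. The classical argument uses \fkg\ monotonicity of $+$-boundary measures on a discrete product space, representing tail-measurable events as decreasing limits of cylinder events; here the configuration space $\S$ consists of c\`adl\`ag paths rather than elements of a countable product, so one must verify that the cylinder sub-$\s$-algebras generate the full Borel $\s$-algebra of $\S$ and that the \fkg\ monotonicity persists through the weak limit. The mechanics of this adaptation closely parallel the classical treatment (see e.g.\ Georgii's book) and rely on the monotone-convergence arguments already carried out for this model in~\cite{bjo_phd}.
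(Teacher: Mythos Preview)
Your route is genuinely different from the paper's.  The paper does not argue via extremality or tail-triviality at all: it first reduces (via a Stone--Weierstrass argument on the compact subspaces $\Sigma_\d\subset\Sigma$) to the monomials $\s_A$, and then proves $\langle\s_A\s_{B+x}\rangle\to\langle\s_A\rangle\langle\s_B\rangle$ by a direct Griffiths-inequality sandwich.  For the wired state it inserts an external field $h$ supported on the translated box and uses monotonicity in $h$ to trap $\langle\s_A\rangle^{\w}_{K;h\circ\tau_x}$ between $\langle\s_A\rangle^{(\w)}$ and $\langle\s_A\rangle^{\w}_{K(x)}$; for the free state it instead perturbs the couplings $J$ and the Poisson intensity $q$ of the discontinuity process $D$, using monotonicity in $(J,q)$ to get an analogous sandwich.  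Your local-approximation step is cleaner than the paper's Stone--Weierstrass detour (which is forced on the paper because the $\s_A$ are not continuous), and your wired argument via tail-triviality is a reasonable alternative whose main cost, as you say, is setting up DLR and reverse-martingale machinery for c\`adl\`ag-path configurations.

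The free case, however, has a real gap.  The decomposition $\mu^{(\f)}=\tfrac12(\mu^{(\w)}+\mu^{(-)})$ does not follow from spin-flip symmetry of $\mu^{(\f)}$ together with extremality of $\mu^{(\w)}$; that would require knowing that $\mu^{(\w)}$ and $\mu^{(-)}$ are the \emph{only} extremal Gibbs states entering the extremal decomposition of $\mu^{(\f)}$.  Equivalently, it amounts to equality of the free and wired infinite-volume random-cluster measures, which is exactly the kind of statement that can fail at an exceptional parameter value such as $\l_\crit$---and $\l_\crit$ is precisely where the lemma is ultimately applied (through Proposition~\ref{ergodic_prop} and then Proposition~\ref{crit-perc_prop}).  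So in the case $M^+>0$ your argument rests on an unproved and nontrivial input.  The paper's $(J,q)$-sandwich avoids this entirely: it never compares $\mu^{(\f)}$ to $\mu^{(\w)}$, only $\mu^{(\f)}$ to finite-volume free measures on growing boxes $K(x)$.  (A smaller point: your localisations $C_i^{(n)}(\s)=C_i(\s|_{K}+\s_0|_{K^c})$ are not even, since flipping $\s$ does not flip $\s_0$; this is easily repaired by symmetrising, but is worth noting.)
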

The proof follows the strategy  
in the appendix of~\cite{adcs}, and is based on
first proving the statement for functions of
the form $C(\s)=\s_A$ using the Griffiths inequality
and then extending to more general functions using the
Stone--Weierstrass theorem.   However, 
there are two  diffculties associated with
this approach:  firstly, the function $C(\s)=\s_A$
is not continuous;  secondly, $\S$ is not compact.
(The locally compact version of the Stone--Weierstrass
theorem is not appropriate since the functions $C$
we want to consider do not `vanish at infinity'.)
Nonetheless, we have the following result.
Let $\cG$ denote the (real) algebra of functions
generated by the monomials of the form $\s_A$
for finite $A\se\KK$.
For tidier notation we drop the superscript $^{(\w)}$
or $^{(\f)}$ in the following result, which holds for
both cases.
\begin{proposition}\label{sw_prop}
Suppose $F:\S\to\RR$ is a bounded and 
measurable function such that
for all $G\in\cG$
\[
\lim_{\|x\|\to\oo}\langle G(\s) [F\circ \tau_x](\s)\rangle
=\langle G(\s)\rangle \langle F(\s)\rangle.
\]
Then for all bounded, 
uniformly continuous $C:\S\to\RR$ we also have 
\[
\lim_{\|x\|\to\oo}\langle C(\s) [F\circ \tau_x](\s)\rangle
=\langle C(\s)\rangle \langle F(\s)\rangle.
\]
\end{proposition}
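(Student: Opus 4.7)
My plan is to follow the template in the appendix of~\cite{adcs}, modified to address two features of the quantum setting: the spin monomials $\s_A\in\cG$ are not continuous on $\S$ in the metric $d$, and $\S$ is not compact.

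To handle the discontinuity I would introduce a smoothed algebra $\wtilde{\cG}$, namely the real algebra of bounded functions on $\S$ generated by the time-averaged spins
\[
f_{x,t,h}(\s)=\frac{1}{h}\int_t^{t+h}\s(x,s)\,ds,\qquad x\in\ZZ^d,\ t\in\RR,\ h>0.
\]
Each generator is bounded by $1$ and continuous with respect to $d$, and by Fubini's theorem each monomial $\prod_{i=1}^k f_{x_i,t_i,h_i}(\s)$ equals the average of $\prod_{i=1}^k\s(x_i,s_i)$ over $(s_i)\in\prod_i[t_i,t_i+h_i]$ against the product uniform probability measure. Consequently the mixing hypothesis for elements of $\cG$ extends, via bounded convergence and Fubini, first to monomials of $\wtilde{\cG}$ and then by linearity to all of $\wtilde{\cG}$.

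The key remaining ingredient is $L^2(\mu)$-density of $\wtilde{\cG}$. The algebra $\wtilde{\cG}$ contains the constants and separates points of $\S$, because $f_{x,t,h}(\s)\to\s(x,t+)$ as $h\downarrow0$ recovers the entire configuration. A countable subfamily of generators (indexed by $x\in\ZZ^d$ and by $(t,h)$ in a countable dense subset of $\RR\times(0,\oo)$) still separates points, so $\sigma(\wtilde{\cG})$ coincides with the Borel $\sigma$-algebra on the Polish space $(\S,d)$. A standard monotone-class argument then gives density of $\wtilde{\cG}$ in $L^2(\mu)$; no explicit tightness estimate is needed, as inner regularity of $\mu$ is automatic on a Polish space.

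Given this density the conclusion follows quickly. For any $\eta>0$ choose $G\in\wtilde{\cG}$ with $\|C-G\|_{L^2(\mu)}<\eta$; then $\|C-G\|_{L^1(\mu)}<\eta$ by Cauchy--Schwarz. The triangle inequality together with the extended mixing for $G\in\wtilde{\cG}$ gives, for $\|x\|$ large enough,
\[
|\el C\cdot(F\circ\tau_x)\er-\el C\er\el F\er|\leq 2\|F\|_\oo\eta+|\el G\cdot(F\circ\tau_x)\er-\el G\er\el F\er|<(2\|F\|_\oo+1)\eta,
\]
and letting $\eta\downarrow0$ finishes the proof. The main obstacle to a more direct Stone--Weierstrass argument on $\S$ is precisely the non-compactness: Stone--Weierstrass only gives uniform approximation on compact sets, and it is the passage through the smoothed algebra and $L^2$-density that lets one avoid controlling a polynomial in $\wtilde{\cG}$ outside such a compact set.
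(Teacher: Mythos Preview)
Your proof is correct but takes a genuinely different route from the paper's.  The paper discretises the \emph{space}: for each $\d>0$ it introduces the compact subspace $\S_\d\subset\S$ of configurations constant on intervals of length~$\d$, together with the map $\s\mapsto\s_\d$ and $F_\d(\s)=F(\s_\d)$.  Restrictions of $\cG$ to $\S_\d$ form a point-separating algebra, so Stone--Weierstrass applies on this compact space; the crucial observation is that if $G\in\cG$ then $G_\d\in\cG$ as well, so the hypothesis can be invoked directly for $G_\d$.  Uniform continuity of $C$ is then used in an essential way to control $|C(\s)-C(\s_\d)|$ via the probability that $d(\s,\s_\d)\geq\eta$.

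You instead smooth the \emph{algebra}: you pass from $\cG$ to the continuous algebra $\wtilde\cG$ of time-averaged spins, extend the mixing hypothesis from $\cG$ to $\wtilde\cG$ by writing each $\wtilde\cG$-monomial as an average of $\cG$-monomials and applying Fubini plus dominated convergence, and then replace uniform approximation by $L^2$-density (via the functional monotone class theorem and the fact that a countable separating family of Borel functions on a Polish space generates the Borel $\s$-algebra).  This avoids Stone--Weierstrass and compactness altogether.  A bonus is that your argument never uses the uniform continuity of $C$: it goes through for any bounded measurable $C$, so you in fact obtain a slightly stronger statement than the one asserted.  The paper's route is more concrete and stays within the original algebra~$\cG$, at the cost of the discretisation step and the reliance on uniform continuity; your route is more abstract but yields a cleaner reduction.
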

\begin{proof}
For each $\d>0$ let $\S_\d$ denote the set of functions
$\s\in\S$ which are constant on each interval of the form
$\{x\}\times[k\d,(k+1)\d)$ for $k\in\ZZ$.  Then (by a diagonal
argument or otherwise) $\S_\d$ is compact.  Define a mapping
$\S\to\S_\d$ by letting $\s\mapsto\s_\d$ where 
$\s_\d(x,t)=\s(x,\d\lfloor t/\d\rfloor)$,
and for $F:\S\to\RR$ let $F_\d:\S\to\RR$ be given by
$F_\d(\s)=F(\s_\d)$.  Note that if $G\in\cG$ then $G_\d\in\cG$.
Let $\cC_\d$ denote the set of continuous functions
$\S_\d\to\RR$ and
$\cG_\d$ the set of restrictions of functions in $\cG$
to $\S_\d$.  Then $\cG_\d$
is an algebra of functions in $\cC_\d$, 
and $\cG_\d$ separates
the points of $\S_\d$ (if $\s,\s'\in\S_\d$ differ at the
point $(x,k\d)$ then, by definition, 
$\s(x,k\d)\neq\s'(x,k\d)$).  Thus by the
Stone--Weierstrass theorem $\cC_\d$ is the uniform closure
of $\cG_\d$, meaning that for each 
bounded, uniformly continuous $C:\S\to\RR$
and each $\eps>0$ there
is $G\in\cG$ such that
\[
\sup_{\s\in\S}|G_\d(\s)-C_\d(\s)|=
\sup_{\s\in\S_\d}|G_\d(\s)-C_\d(\s)|
<\eps.
\]
Let $M$ be a uniform upper bound on both $|F|$ and $|C|$.
We have that 
\begin{equation}\label{CMG2}
\begin{split}
| \langle C(\s) [F\circ\tau_x](\s)\rangle &
- \langle G_\d(\s) [F\circ\tau_x] (\s)\rangle |\\
&\leq M \langle |G_\d(\s) -C_\d(\s)|\rangle+M \langle |C(\s)
-C_\d(\s)|\rangle\\
&\leq M\eps + M \langle |C(\s) -C_\d(\s)|\rangle.
\end{split}
\end{equation}
For $\eta>0$ sufficiently small, 
\[\begin{split}
|C(\s) -C_\d(\s)|&=
|C(\s) -C(\s_\d)|\one\{d(\s,\s_\d)<\eta\}\\
&\quad+|C(\s) -C_\d(\s)|\one\{d(\s,\s_\d)\geq\eta\}\\
&\leq \eps + 2M \one\{d(\s,\s_\d)\geq\eta\}.
\end{split}\]
Thus
$\el |C(\s) -C_\d(\s)|\er \leq
\eps+2M \mu(d(\s,\s_\d)\geq\eta)$,
and the last probability converges to 0
as $\d\downarrow 0$ (for example along a sequence
of the form $\d=2^{-m}$).
Hence~\eqref{CMG2} can be made arbitrarily small, uniformly in $x$.
The same bound applies to
$|\el C(\s)\er\el F(\s)\er - \el G_\d(\s)\er\el F(\s)\er |$.
Since $G_\d\in\cG$, the result follows.
\end{proof}

\begin{remark}\label{sw_rk}
Proposition~\ref{sw_prop} holds also if we assume in addition that
$F$, $G$ and $C$ are even functions.  
To prove this in detail one may
pass to the quotient space $\S/\!\!\sim$, where the equivalence relation
$\sim$ consists of all pairs $\{\s,-\s\}$ for $\s\in\S$.
An even function on $\S$ may be identified 
with 
a function on $\S/\!\!\sim$ and this identifies continuous functions with
continuous functions.  (This uses the fact that the mapping
$\s\mapsto-\s$ is an isometry and~\cite[Lemma~3.3.6]{burago}.)
The subspace $\S_\d/\!\!\sim$ is compact and the even functions 
in $\cG$ separate the points of $\S_\d/\!\!\sim$, so we may apply the
Stone--Weierstrass theorem in the same way as in
Proposition~\ref{sw_prop}.  The remaining estimates are the
same. 
\end{remark}

\begin{proof}[Proof of Lemma~\ref{mix_lem}]
We allow ourselves to be rather brief and omit some details.
For the boundary condition $\w$
it suffices to show that for all finite sets $A,B\se\KK$,
\begin{equation}\label{corr_mix_eq}
\el\s_A\s_{B+x}\er^{(\w)}\to \el\s_A\er^{(\w)}\el\s_B\er^{(\w)}
\mbox{ as }\|x\|\to\oo,
\end{equation}
by Proposition~\ref{sw_prop} and linearity.
For the boundary condition $\f$
we need to show that~\eqref{corr_mix_eq}
holds (with $\w$ replaced by $\f$) when $A$ and $B$
are sets of even size, by Remark~\ref{sw_rk}.  

Fix $N_0<N$ and $r_0<r$ 
large enough that $A,B\se K(N_0,r_0)$,
and write $K$ for $K(N,r)$ and $K_0$ for $K(N_0,r_0)$.
We begin by showing that for each bounded,
measurable function $h:K_0\to[0,\oo)$ we have
\begin{multline}\label{exp_mix_eq}
\lim_{\|x\|\to\oo}\Big\langle
\s_A\exp\Big(\sum_{y\in\L_{N_0}}\int_{I_{r_0}}h(y,t)\s(y+x,t)dt\Big)\Big\rangle^{(\w)}
\\=\langle\s_A\rangle^{(\w)}
\Big\langle\exp\Big(\sum_{y\in\L_{N_0}}\int_{I_{r_0}}h(y,t)\s(y,t)dt\Big)\Big\rangle^{(\w)}.
\end{multline}
To go from~\eqref{exp_mix_eq} to~\eqref{corr_mix_eq}
one expands the exponentials as a power series.
Using the fact that~\eqref{exp_mix_eq} holds for arbitrary
$h$ and that correlation functions of the form 
$\el\s(y_1,t_1)\dotsb\s(y_k,t_k)\er^{(\w)}$
are continuous in $t_1,\dotsc,t_k$ one may 
deduce pointwise convergence of the 
form~\eqref{corr_mix_eq} from the corresponding convergence 
of repeated sums and integrals over $y_1,\dotsc,y_k$
and $t_1,\dotsc,t_k$.

We now show~\eqref{exp_mix_eq}.
Write $K(x)=\L_{\|x\|-N_0}\times I_{r_0+\|x\|}$ where
 $x\in\ZZ^d$ is fixed with
$\|x\|$  large enough that $A\se K(x)$.
Let $N,r$ be large enough that $K_0+x\se K$.
Write 
\[
h(\s)=\sum_{y\in\L_{N_0}}\int_{I_{r_0}}h(y,t)\s(y,t)dt
\]
and let $\el\cdot\er_{K;h\circ\tau_x}^\w$ denote the 
wired space--time Ising measure
defined as in~\eqref{strn_eq}--\eqref{stpf_eq} 
but with the additional term 
\begin{equation}\label{h-bar}
h(\tau_x(\s))=
\sum_{y\in\L_{N_0}}\int_{I_{r_0}} h(y,t)\s(y+x,t)dt
\end{equation}
in the exponential.
Using the shorthand 
$\el\cdot\er_K^\w$ for $\el\cdot\er_{N,r}^{\w,\w}$ we have
\begin{equation}\label{KJ_eq}
\big\langle
\s_A\exp\big(h(\tau_x(\s))
\big)\big\rangle_{K}^\w
=\langle\s_A\rangle_{K;h\circ\tau_x}^\w
\big\langle\exp\big( h(\tau_x(\s))
\big)\big\rangle_{K}^\w.
\end{equation}
The Griffiths inequality 
(see~\cite[Lemma~2.2.20]{bjo_phd} for a proof for the present model)
implies that the correlation 
$\el\s_A\er_{K;h\circ\tau_x}^\w$
is increasing when viewed as a function of $h$
(under pointwise ordering of $h$).  Comparison with the
case $h\equiv 0$ gives
\begin{equation*}
\langle\s_A\rangle_{K; h\circ\tau_x}^\w\geq
\langle\s_A\rangle_{K}^\w.
\end{equation*}
If we let $h(y,t)\to\oo$ for all $(y,t)\in K_0$
then $\langle\cdot\rangle_{K; h\circ\tau_x}^\w$ 
converges to a state corresponding
to `wiring' the region $K_0+x$, and we deduce that
\begin{equation*}
\langle\s_A\rangle_{K; h\circ\tau_x}^\w\leq
\langle\s_A\rangle_{K(x)}^\w
\end{equation*}
(cf.\ \cite[Lemma~2.2.22]{bjo_phd}).
Letting $N,r\to\oo$ and applying translation-invariance
we obtain
\begin{equation}\label{Kx_eq}
\begin{split}
\langle\s_A\rangle^{(\w)}
\big\langle\exp\big(h(\s) \big)\big\rangle^{(\w)}
&\leq \big\langle
\s_A\exp\big(h(\tau_x(\s))
\big)\big\rangle^{(\w)}
\\&\leq \langle\s_A\rangle_{K(x)}^\w
\big\langle\exp\big(
h(\s) \big)\big\rangle^{(\w)}
\end{split}
\end{equation}
Letting $\|x\|\to\oo$ we have $K(x)\uparrow\KK$
and hence~\eqref{exp_mix_eq} follows.

For the case of boundary condition 
$\f$ let $J:F(N_0,r_0)\to[-\l,0]$ be measurable
and $q:K_0\to[0,\oo)$ be bounded and measurable.  
Write 
\[
J(\s)=\sum_{yz\in\cE_{N_0}}\int_{I_{r_0}} 
J(yz,t)\s(y,t)\s(z,t)dt,
\]
and (recalling the process $D$ of 
discontinuities of $\s$)
\[
q(\s)=\sum_{(y,t)\in D\cap K_0} q(y,t).
\]
Note that $q(\s)$ is a function of $D$ only, and we
may therefore write  $q(D)$ in place of $q(\s)$.  With this notation
we have
$q(\tau_x(\s))= q(\tau_{-x}(D))$.
Let $\el\cdot\er_{K; (J,q)\circ\tau_x}^\f$ denote the measure 
defined as in~\eqref{strn_eq}--\eqref{stpf_eq} 
but with the additional term 
$J(\tau_x(\s))+q(\tau_x(\s))$
in the exponential.  We have that
\begin{multline}\label{Jq_eq}
\big\langle
\s_A\exp\big(J(\tau_x(\s))+q(\tau_x(\s))
\big)\big\rangle_{K}^\f\\
=\langle\s_A\rangle_{K;(J,q)\circ\tau_x}^\f
\big\langle\exp\big( J(\tau_x(\s))+q(\tau_x(\s))
\big)\big\rangle_{K}^\f.
\end{multline}
By standard properties of Poisson processes, 
$\el\cdot\er_{K; (J,q)\circ\tau_x}^\f$ 
may alternatively be obtained by first modifying the intensity
of $D$ under the a-priori measure $E_{N,r}$ from the constant
intensity $\d$ to the variable intensity 
\[
\d(y,t)=\d\exp(q(y-x,t)\one\{(y,t)\in K_0+x\}),
\]
and then having only the additional term $J(\tau_x(\s))$
in the exponential.  
Thus the correlation $\langle\s_A\rangle_{K;(J,q)\circ\tau_x}^\f$
is increasing in $J$ and decreasing in $q$,
and comparison with the cases $J\equiv-\l$,
$q\equiv\oo$, respectively $J\equiv0$, $q\equiv0$,
gives 
\[
\langle\s_A\rangle_{K(x)}^\f\leq
\langle\s_A\rangle_{K;(J,q)\circ\tau_x}^\f\leq
\langle\s_A\rangle_{K}^\f.
\]
Similarly to~\eqref{exp_mix_eq} we deduce that 
\[
\lim_{\|x\|\to\oo}\big\langle
\s_A\exp\big(J(\tau_x(\s))+q(\tau_x(\s))\big)
\big\rangle^{(\f)}=\langle\s_A\rangle^{(\f)}
\big\langle
\exp\big(J(\s)+q(\s)\big)
\big\rangle^{(\f)}.
\]
Expanding the exponential
$\exp\big(J(\s)\big)$ as for~\eqref{corr_mix_eq}
we deduce that for all finite $B\se\FF$,
\begin{equation}\label{q-lim}
\el\s_A\s_{B+x}\exp\big(q(\tau_x(\s))\big)\er^{(\f)}
\to \el\s_A\er^{(\f)}\el\s_B\exp\big(q(\s)\big)\er^{(\f)}
\mbox{ as }\|x\|\to\oo.
\end{equation}
Let $x_1,\dotsc,x_n\in\ZZ^d$ and let
$s_1<t_1,\dotsc,s_n<t_n$ be real numbers
such that all points of the form $(x_j,s_j)$
or $(x_j,t_j)$ are distinct, and let 
$B'$ be the set of these points.  (Thus $B'\cap (\{x\}\times\RR)$
has even size for all  $x\in\ZZ^d$.)
One may deduce from~\eqref{q-lim} that for any such set $B'\se\KK$ we
have that  
\begin{equation}\label{q-lim-2}
\el\s_A\s_{B+x}\s_{B'+x}\er^{(\f)}
\to \el\s_A\er^{(\f)}\el\s_B\s_{B'}\er^{(\f)}
\mbox{ as }\|x\|\to\oo.
\end{equation}
This proves the claim of the lemma for the boundary condition $\f$
since for any set $B\se\KK$ of even size one may write
$\s_B=\s_{B'}\s_{B''}$ for some set $B'$ as above, and some
finite $B''\se\FF$.  

One way to see that~\eqref{q-lim} implies~\eqref{q-lim-2} is as
follows (we give only a sketch).  
One may see~\eqref{q-lim} as a result about convergence of
the Laplace functionals of the point processes
$\tau_{x}^{-1}(D)\cap K_0$
with certain `skewed' distributions.  
Using Theorem~11.1.VI and Proposition~11.1.VII
of~\cite{dvj} it follows that 
\[
\el\s_A\s_{B+x}\one\{\tau^{-1}_{x}(D)\in\cC\}\er^{(\f)}
\to \el\s_A\er^{(\f)}\el\s_B\one\{D\in\cC\}\er^{(\f)}
\]
for each `stochastic continuity set'  $\cC$.  These sets include the
events 
\[
\cC_j=\{|D\cap (\{x_j\}\times[s_j,t_j))|\mbox{ is even}\},
\]
and using the identity
$\s(x_j,s_j)\s(x_j,t_j)=2\one_{\cC_j}-1$
one may write $\s_{B'}$ as a linear combination of terms of the form 
$\one\{\bigcap_{j\in J}\cC_j\}$ for $J\se\{1,\dotsc,n\}$.
We deduce~\eqref{q-lim-2} by linearity.
\end{proof}

\subsection*{Acknowledgement}

The author thanks Geoffrey Grimmett for drawing his attention
to the article~\cite{adcs}, and the two anonymous referees for their
helpful comments, corrections and suggestions.

\end{document}